\newtheorem{theorem}{Theorem}
\begin{document}

\title{The Indoor Mobile Coverage Problem Using UAVs}

\author{Hazim Shakhatreh, Abdallah Khreishah, and Issa Khalil

\thanks{Hazim Shakhatreh and Abdallah Khreishah are with the Department of
	Electrical and Computer Engineering, New Jersey Institute of Technology
	(email: \{hms35,abdallah\}@njit.edu)\\
	Issa Khalil is with Qatar Computing Research Institute (QCRI), HBKU, Doha, Qatar (email: ikhalil@hbku.edu.qa)
}
\thanks{Part of this work was presented in ICC 2017~\cite{haz2017efficient} and ICICS 2017~\cite{hazim2017}.}}

\maketitle

\begin{abstract}
Unmanned aerial vehicles (UAVs) can be used as
aerial wireless base stations when cellular networks are not operational due to natural disasters. They can also be used to supplement the ground base station in order to provide better coverage and higher data rates for the users. Prior studies on UAV-based wireless coverage typically consider
an Air-to-Ground path loss model, which assumes that the users
are outdoor and located on a 2D plane. In this paper,
we propose using UAVs to provide wireless coverage for indoor users inside a high-rise building. First, we present realistic Outdoor-Indoor path loss models and describe the tradeoff introduced by these models. Then we study the problem of efficient placement of a single UAV, where the objective is to minimize the total transmit power required to cover the entire high-rise building. The formulated problem is non-convex and is generally difficult to solve. To that end, we consider three cases of practical interest and provide efficient solutions to the formulated problem under these cases. Then we study the problem of minimizing the number of UAVs required to provide wireless coverage to high rise buildings and prove that this problem is NP-complete. Due to the intractability of the problem, we use clustering to minimize the number of UAVs required to cover the indoor users. We demonstrate through simulations that the method that clusters the building into regular structures and places the UAVs in each cluster requires 80\% more number of UAVs relative to our clustering algorithm.
\end{abstract}

\begin{IEEEkeywords}
Unmanned aerial vehicles, Outdoor-to-Indoor
path loss model, gradient descent algorithm, particle swarm optimization, $k$-means clustering.
\end{IEEEkeywords}

\IEEEpeerreviewmaketitle

\section{Introduction}

UAVs can be used to provide wireless coverage during
emergency cases where each UAV serves as an aerial wireless
base station when the cellular network goes down~\cite{bupe2015relief}. They
can also be used to supplement the ground base station in
order to provide better coverage and higher data rates for the
users~\cite{bor2016efficient}.

In order to use a UAV as an aerial wireless base station, the
authors in~\cite{al2014modeling} presented an Air-to-Ground path loss model that helped the academic researchers to formulate many important UAV-based coverage problems. The authors in~\cite{mozaffari2015drone} utilized this model to evaluate the
impact of a UAV altitude on the downlink ground coverage and to determine the optimal values for altitude which lead to maximum coverage and minimum required transmit power. In~\cite{mozaffari2016optimal}, the authors used the path loss model to propose a power-efficient deployment for UAVs under the constraint of satisfying the rate requirement for all ground users. The authors in~\cite{mozaffari2016efficient} utilized the path loss model to study the optimal deployment of multiple UAVs equipped with directional antennas, using circle packing theory. The 3D locations of the UAVs are determined in a way that the total coverage area is maximized. In~\cite{kalantari2016number}, the authors used the path loss model to find the minimum number of UAVs and their 3D locations so that all outdoor ground users are served. However, it is assumed that all users are outdoor and the location of each user can be represented by an outdoor 2D point. These assumptions limit the applicability of this model when one needs to consider indoor users.

Providing good wireless coverage for indoor users is very
important. According to Ericsson report~\cite{ericsson}, 90\% of the time people are indoor and 80\% of the mobile Internet access traffic also happens indoors~\cite{alcatel,cisco}. To guarantee wireless coverage, service providers are faced with several key challenges, including providing service to a large number of indoor users and the ping pong effect due to interference from near-by macro cells~\cite{comm,amplitic,zhang2016study}. In this paper, we propose using UAVs to provide wireless coverage for users inside a high-rise building during emergency cases and special events (such as concerts, indoor sporting events, etc.), when the cellular network service is not available or it is unable to serve all indoor users. To the best of our knowledge, this is the first work that proposes using UAVs to provide wireless coverage for indoor users. We summarize our main contributions as follows:
\begin{itemize}
\item We utilize an Outdoor-Indoor path loss model for low-SHF band (450 MHz to 6 GHz)~\cite{series2009guidelines}, certified
by ITU, and an Outdoor-Indoor path loss model for high-SHF band (over 6 GHz)~\cite{imai2016outdoor}, then we show the tradeoff introduced by these models.
\item We formulate the problem of efficient placement of a single UAV, where the objective is to minimize the total transmit power
required to cover the entire high-rise building.
\item Since the formulated problem is non-convex and is generally difficult to solve, we consider three cases of practical interest and provide efficient solutions to the formulated problem under these cases and for different operating frequencies (low-SHF and high-SHF bands). In the first case, we aim to find the minimum transmit power such that an indoor user with the maximum path loss can be covered. In the second case, we assume that the locations of indoor users are symmetric across the dimensions of each floor (such as office buildings or hotels), and propose a gradient descent algorithm for finding an efficient location of a UAV. In the third case, we assume that the locations of indoor users are uniformly distributed in each floor, and propose a particle swarm optimization algorithm to find an efficient 3D placement of a UAV that tries to minimize the total transmit power required to cover the indoor users.
 \item Due to the limited transmit power of a UAV, we formulate the problem of minimizing the number of UAVs required to provide wireless coverage to high rise building and prove that this problem is NP-complete. 
 \item Due to the intractability of the problem, we use clustering to minimize the number of UAVs required to cover indoor users. We demonstrate through simulations that the method that clusters the building into regular structures and places the UAVs in each cluster requires 80\% more number of UAVs relative to our clustering algorithm.
\end{itemize}

\section{System Model}
\label{sec:system_model}
\subsection{System Settings}
\label{subsec:system settings}
Let ($x_{UAV}$,$y_{UAV}$,$z_{UAV}$) denote the 3D location of the UAV. We assume that all users are located inside a high-rise building as shown in Figure~\ref{fig1}, and use ($x_{i}$,$y_{i}$,$z_{i}$) to denote the location of user $i$. The dimensions of the high-rise building, in the shape of a rectangular prism, are $[0,x_b]$ $\times$ $[0,y_b]$ $\times$ $[0,z_b]$. Also, let $d_{out,i}$ be the distance between the UAV and indoor user $i$, let $\theta_{i}$ be the incident angle , and let $d_{in,i}$ be the distance between the building wall and indoor user $i$.

	\subsection{Outdoor-Indoor Path Loss Models}
	\label{subsec:Outdoor-Indoor Path Loss Model}
	The Air-to-Ground path loss model presented in~\cite{al2014modeling} is not appropriate when we consider wireless coverage for indoor users, because this model assumes that all users are outdoor and located at 2D points. In this paper, we adopt the Outdoor-Indoor path loss model, certified by the ITU~\cite{series2009guidelines}, for low-SHF operating frequency. The path loss is given as follows:\\
	\begin{equation*}
	\begin{split}
	L_i=L_{F}+L_{B}+L_{I}= (w\log_{10}d_{out,i}+w\log_{10}f_{Ghz}+g_{1})\\
	+(g_{2}+g_{3}(1-\cos\theta_{i})^{2})+(g_{4}d_{in,i})
	\end{split}
	\end{equation*}
	where $L_{F}$ is the free space path loss, $L_{B}$ is the building penetration loss, and $L_{I}$ is the indoor loss. In this model, we also have $w$=20, $g_{1}$=32.4, $g_{2}$=14, $g_{3}$=15,$g_{4}$=0.5~\cite{series2009guidelines} and $f_{Ghz}$ is the carrier frequency. 
	
	In~\cite{imai2016outdoor}, the authors clarify the Outdoor-to-Indoor path loss characteristics based on the measurement for 0.8 to 37 GHz frequency band. We adopt this path loss model for high-SHF operating frequency. The path loss is given as follows:\\
	\begin{equation*}
	\begin{split}
	L_i=L_{F}+L_{B}+L_{I}=
	(\alpha_1+\alpha_2\log_{10}d_{out,i}+\alpha_3\log_{10}f_{Ghz})+~~~\\
	(\beta_1+\frac{\beta_2-\beta_1}{1+exp(-\beta_3(\theta_i-\beta_4))})+(\gamma_1d_{in,i})
	\end{split}
	\end{equation*}
	 In this model, we have
	$\alpha_1$=31.4, $\alpha_2$=20, $\alpha_3$=21.5, $\beta_1$=6.8, $\beta_2$=21.8, $\beta_3$=0.453, $\beta_4$=19.7 and $\gamma_1$=0.49.
	
	Note that there is a key tradeoff in the path loss models when the horizontal distance between the UAV and a user changes. When this horizontal distance increases, the free space path loss (i.e., $L_F$) increases as $d_{out,i}$ increases, while the building penetration loss (i.e., $L_B$) decreases as the incident angle (i.e., $\theta_i$) decreases (Figure~\ref{fig2} shows the penetration loss for high-SHF band). Similarly, when this horizontal distance decreases, the free space path loss (i.e., $L_F$) decreases as $d_{out,i}$ decreases, while the building penetration loss (i.e., $L_B$) increases as the incident angle (i.e., $\theta_i$) increases. 
	\section{Providing Wireless coverage using a single UAV}
	
	\begin{figure*}[ht]
		\begin{minipage}[b]{0.25\linewidth}
			\centering
			\includegraphics[width=\textwidth]{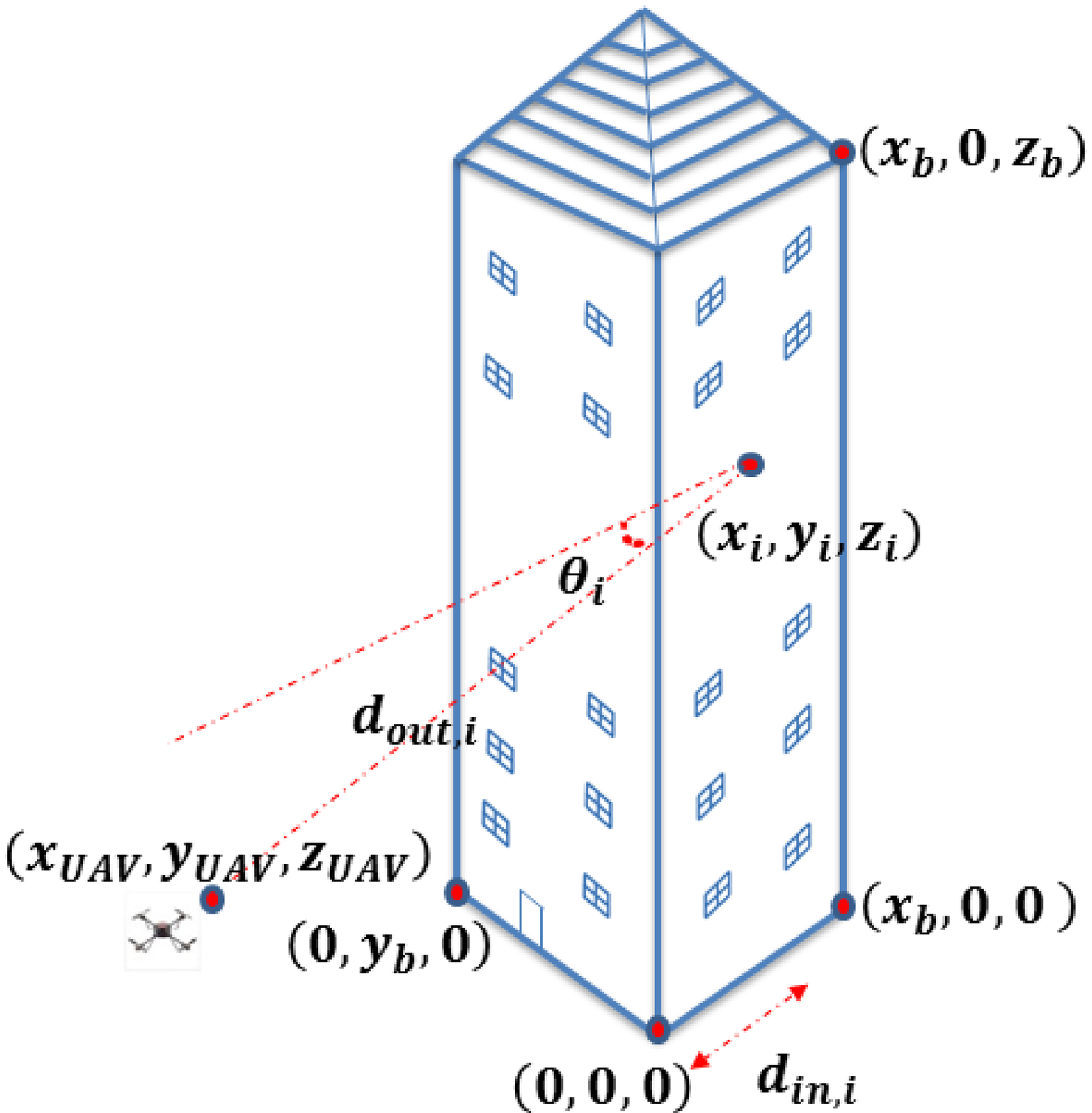}
			\caption{Parameters of the            
				path loss model
			}
			\label{fig1}
		\end{minipage}
		\hspace{0.1cm}
		\begin{minipage}[b]{0.335\linewidth}
			\centering
			\includegraphics[width=\textwidth]{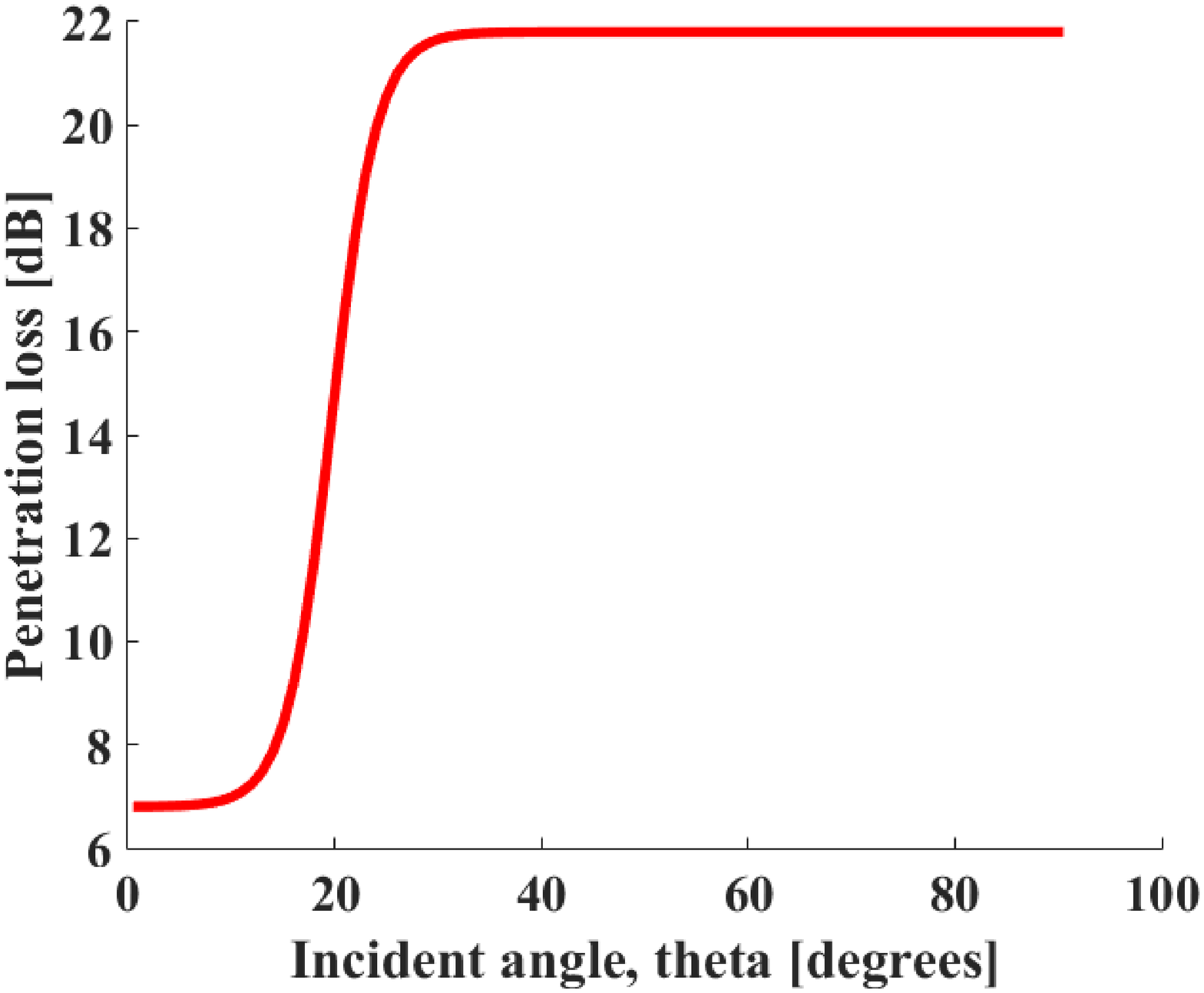}
			\caption{Building penetration loss for high-SHF}
			\label{fig2}
		\end{minipage}
		\hspace{0.1cm}
		\begin{minipage}[b]{0.355\linewidth}
			\centering
			\includegraphics[width=\textwidth]{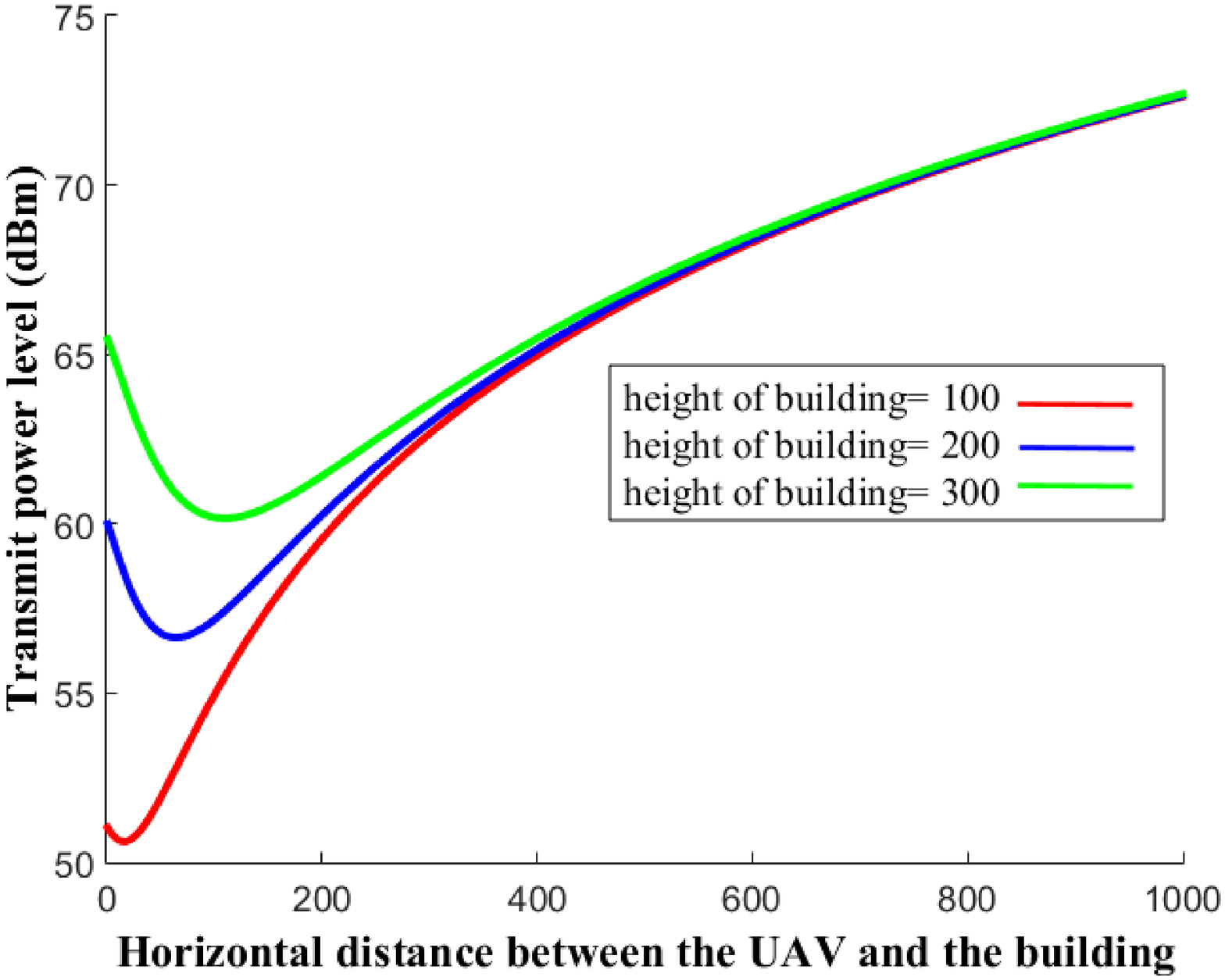}
			\caption{Transmit power required to cover the building 
			}
			\label{fig3}
		\end{minipage}
	\end{figure*}

	\subsection{Problem Formulation }
	Consider a transmission between a UAV located at ($x_{UAV}$,$y_{UAV}$,$z_{UAV}$) and an indoor user $i$ located at ($x_i$,$y_i$,$z_i$). The rate for user $i$ is given by:
	\begin{equation*}
	\begin{split}
	C_{i}=Blog_{2}(1+\dfrac{P_{t,i}/L_i}{N})
	\end{split}
	\end{equation*}
	where $B$ is the transmission bandwidth of the UAV, $P_{t,i}$ is the UAV transmit power to indoor user $i$, $L_i$ is the path loss between a UAV and an indoor user $i$ and $N$ is the noise power. In this paper, we do not explicitly model interference, and instead, implicitly model it as noise.
	
	Let us assume that each indoor user has a channel with bandwidth equals $B/M$, where $M$ is the number of users inside the building and the rate requirement for each user is $v$. Then the minimum power required to satisfy this rate for each user is given by:
	\begin{equation*}
	\begin{split}
	P_{t,i,min}=(2^{\frac{v.M}{B}}-1)\star N\star L_i
	\end{split}
	\end{equation*}
	Our goal is to find the optimal location of UAV such that the total transmit power required to satisfy the downlink rate requirement of each indoor user is minimized. The objective function can be represented as:
	\begin{equation*}
	\begin{split}
	P=\sum_{i=1}^{M}(2^{\frac{v.M}{B}}-1)\star N\star L_i,\\
	\end{split}
	\end{equation*}
	where $P$ is the UAV total transmit power. Since $(2^{\frac{v.M}{B}}-1)\star N$ is constant, our problem can be formulated as:
	\begin{equation*}
	\begin{split}
	\min_{x_{UAV},y_{UAV},z_{UAV}} L_{Total}=\sum_{i=1}^{M}L_i~~~~~~~~~~~~~~~~~~~~~~~~~~~~~~~~~\\
	subject ~to~~~~~~~~~~~~~~~~~~~~~~~~~~~~~~~~~~~~~~~~~~~~~~~~~~~~~~~~~~~\\
	x_{min}\leq x_{UAV}\leq x_{max},~~~~~~~~~~~~~~~~~~~~~~\\
	y_{min}\leq y_{UAV}\leq y_{max},~~~~~~~~~~~~~~~~~~~~~~\\
	z_{min}\leq z_{UAV}\leq z_{max},~~~~~~~~~~~~~~~~~~~~~~\\
	L_{Total}\leq L_{max}~~~~~~~~~~~~~~~~~~~~~~~~~~~
	\end{split}
	\end{equation*}
	Here, the first three constraints represent the minimum and maximum allowed values for $x_{UAV}$, $y_{UAV}$ and  $z_{UAV}$. In the fourth constraint, $L_{max}$ is the maximum allowable path loss and equals $P_{t,max}$$/$$((2^{\frac{v.M}{B}}-1)\star N)$, where $P_{t,max}$ is the maximum transmit power of UAV.
	
	Finding the optimal placement of UAV is generally  difficult because the problem is non-convex. Therefore, in the next subsection, we consider three special cases of practical interest and derive efficient solutions under these cases.
	 \subsection{Efficient Placement of a Single UAV}
	 \label{sec:optimal}
	 \textbf{Case 1.} \textit{The worst location in building:}
	 In this case, we find the minimum transmit power required to cover the building based on the location that has the maximum path loss inside the building. The location that has the maximum path loss in the building is the location that has maximum $d_{out,i}$, maximum $\theta_{i}$, and maximum $d_{in,i}$. The locations that have the maximum path loss are located at the corners of the highest and lowest floors. Since the locations that have the maximum path loss inside the building are the corners of the highest and lowest floors, we place the UAV at the middle of the building ($y_{UAV}$= 0.5$y_b$ and $z_{UAV}$=0.5$z_b$). Then, given Outdoor-to-Indoor path loss models for low-SHF and high-SHF bands, we need to find an efficient horizontal point $x_{UAV}$ for the UAV such that the total transmit power required to cover the building is minimized.
	 \begin{figure*}[ht]
	 	\begin{minipage}[b]{0.315\linewidth}
	 		\centering
	 		\includegraphics[width=\textwidth]{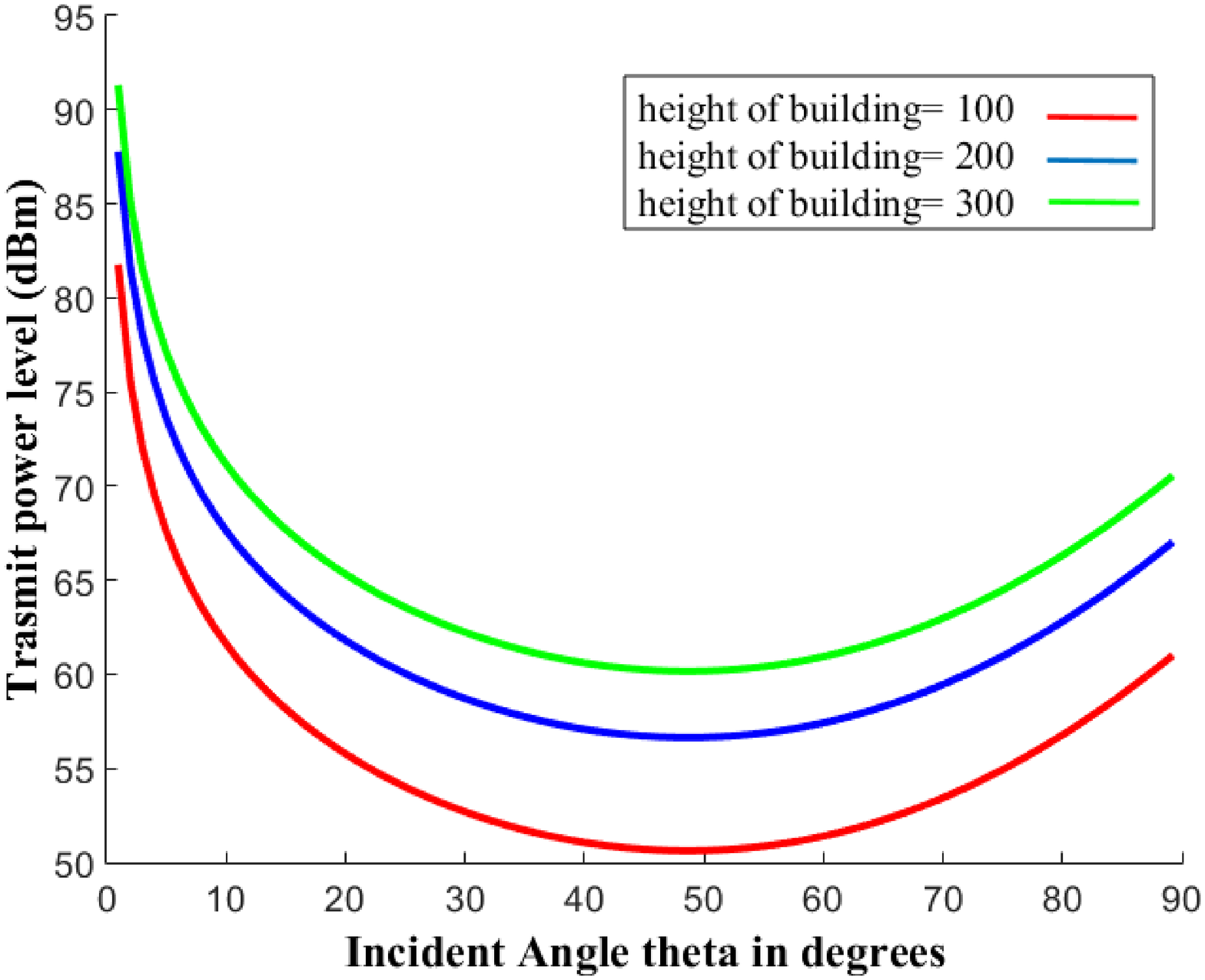}
	 		\caption{Transmit power required to cover the
	 			building, $f_c$=2 GHz
	 		}
	 		\label{fig4}
	 	\end{minipage}
	 	\hspace{0.1cm}
	 	\begin{minipage}[b]{0.315\linewidth}
	 		\centering
	 		\includegraphics[width=\textwidth]{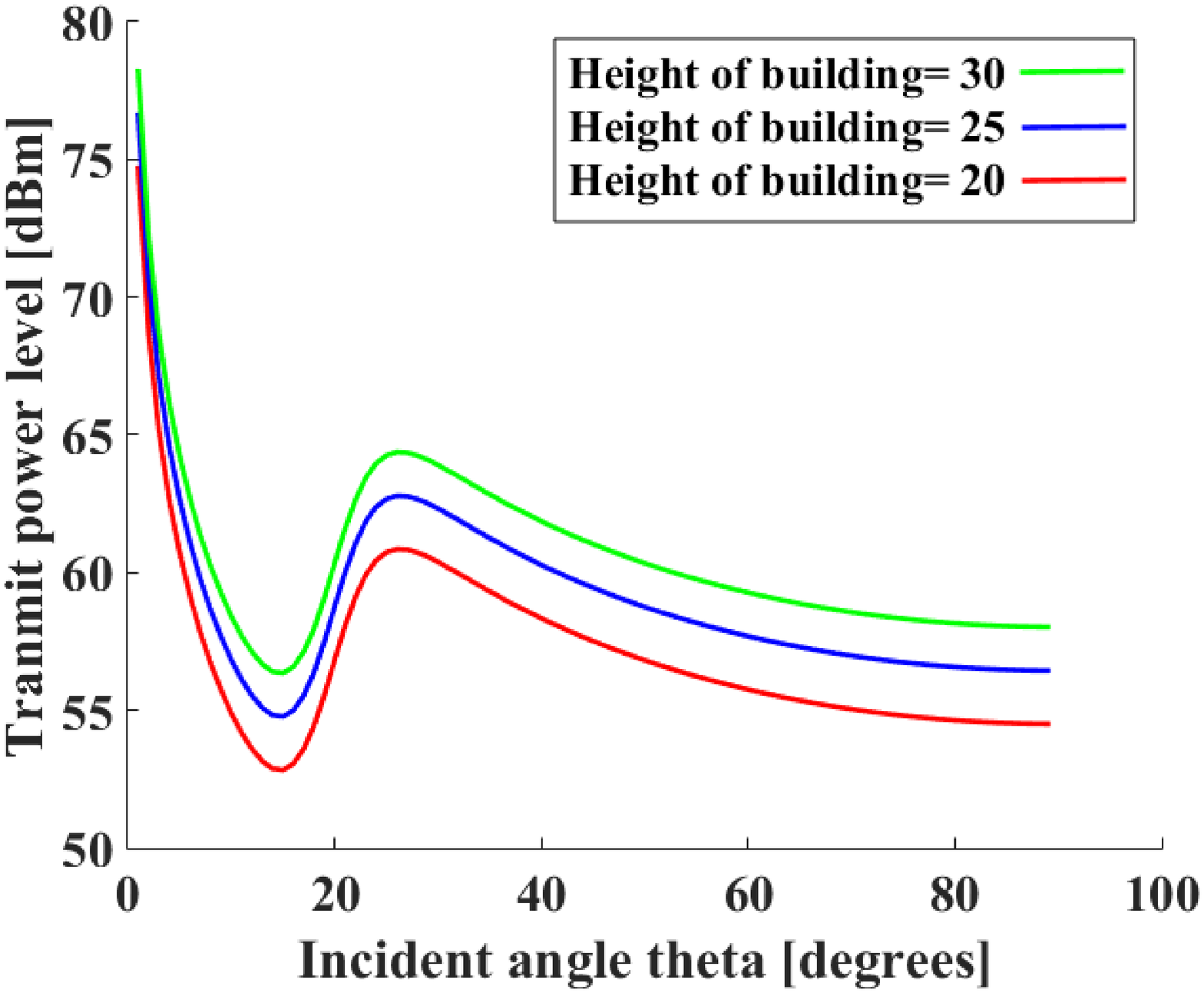}
	 		\caption{Transmit power required to cover the
	 			building, $f_c$=10 GHz
	 		}
	 		\label{fig5}
	 	\end{minipage}
	 	\hspace{0.1cm}
	 	\begin{minipage}[b]{0.315\linewidth}
	 		\centering
	 		\includegraphics[width=\textwidth]{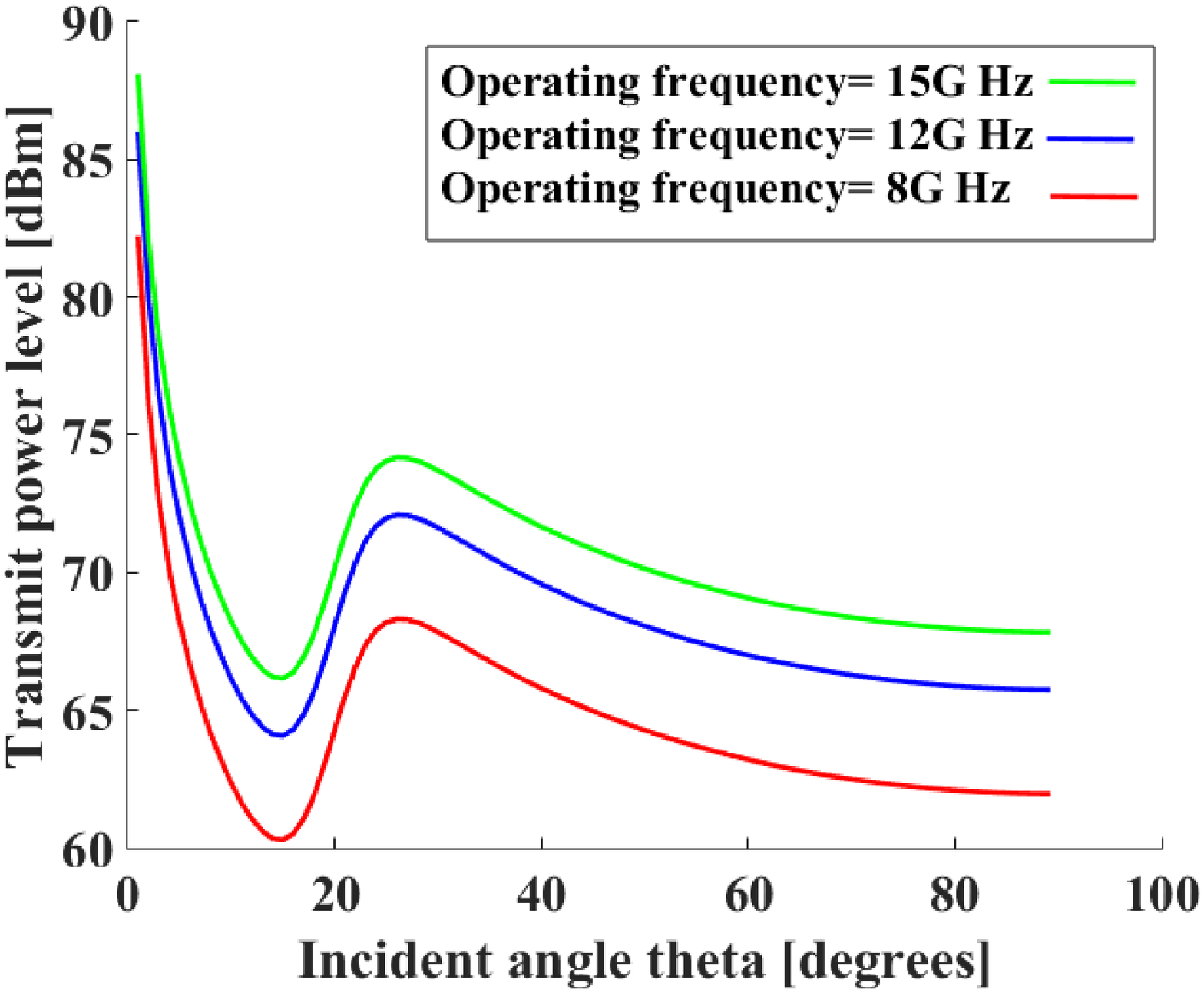}
	 		\caption{Transmit power required to
	 		cover 30 meter building height
	 		}
	 		\label{fig6}
	 	\end{minipage}
	 \end{figure*}
	 
	 Now, when the horizontal distance between the UAV and this location increases, the free space path loss also increases as $d_{out,i}$ increases, while the building penetration loss decreases because we decrease the incident angle $\theta_{i}$. In Figure~\ref{fig3}, we demonstrate the minimum transmit power required to cover a building of different heights, where the minimum transmit power required to cover the building is given by:
	 \begin{equation*}
	 \begin{split}
	 P_{t,min}(dB)=P_{r,th}+L_i
	 \end{split}
	 \end{equation*}
	 \begin{equation*}
	 \begin{split}
	 P_{r,th}(dB)=N+\gamma_{th}
	 \end{split}
	 \end{equation*}
	 Here, $P_{r,th}$ is the minimum received power, $N$ is the noise power (equals -120dBm), $\gamma_{th}$ is the threshold SNR (equals 10dB), $y_b$=50 meters , $x_b$=20 meters and the carrier frequency is 2Ghz. The numerical results show that there is an optimal horizontal point that minimizes the total transmit power required to cover a building. Also, we note that when the height of the building increases, the optimal horizontal distance increases. This is to compensate for the increased building penetration loss due to an increased incident angle.

	 In Theorem 1, we characterize the optimal incident angle $\theta$ for low-SHF band that minimizes the transmit power required to cover the building. This helps us finding the optimal horizontal distance between the UAV and the building. 
	  \begin{theorem}
	  	\label{theorem_one}
	  	For the low-SHF operating frequency case, when we place the UAV at the middle of building , the optimal incident angle $\theta$ that minimizes the transmit power required to cover the building will be equal to $48.654^{o}$ and the optimal horizontal distance between the UAV and the building will be equal to $((\dfrac{0.5z_{b}}{tan(48.654^{o})})^2-(0.5y_{b})^2)^{0.5}-x_{b}$.
	  \end{theorem}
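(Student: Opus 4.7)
The plan is to fix the UAV's $y$- and $z$-coordinates at the mid-face of the building as specified in the text preceding the theorem, identify the indoor user that drives the worst-case path loss, reduce the remaining optimization over the UAV's horizontal offset to one over a single geometric angle, and characterize the minimizer through a first-order condition that must be solved numerically.

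First I would nail down the geometry. Let $D>0$ denote the horizontal distance from the UAV to the near wall of the building, so the UAV sits at $(-D,0.5y_b,0.5z_b)$. By the $y$-$z$ symmetry of the UAV placement, the four corners of the far face that lie on the highest or lowest floor all give the same $L_i$, so I work with the representative corner $(x_b,0,z_b)$. For this user
\[
d_{out,i}=\sqrt{(D+x_b)^2+(0.5y_b)^2+(0.5z_b)^2},
\]
the angle $\theta_i$ (consistent with the tangent expression appearing in the conclusion) satisfies $\tan\theta_i=0.5z_b/\sqrt{(D+x_b)^2+(0.5y_b)^2}$, and $d_{in,i}=x_b$ is the perpendicular depth of the user inside the building, which does not depend on $D$.

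Next I would reduce to a single variable. Writing $v=0.5z_b$, the definition of $\theta_i$ yields $d_{out,i}=v/\sin\theta_i$, so after absorbing every $D$-independent contribution (the frequency term, $g_1$, $g_2$, and $g_4 d_{in,i}$) into a constant $K$, the objective becomes
\[
L(\theta)=20\log_{10}(v/\sin\theta)+g_3(1-\cos\theta)^2+K.
\]
Since $D\mapsto\theta$ is strictly monotone, any minimizer in $\theta$ corresponds to a unique $D$. Differentiating gives
\[
\frac{dL}{d\theta}=-\frac{20}{\ln 10}\cot\theta+2g_3(1-\cos\theta)\sin\theta,
\]
and setting $dL/d\theta=0$, multiplying through by $\sin\theta$, and using $\sin^2\theta=(1-\cos\theta)(1+\cos\theta)$ reduces the stationarity condition to the scalar equation
\[
g_3\ln 10\,(1-\cos\theta)^2(1+\cos\theta)=10\cos\theta.
\]

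Finally I would solve this transcendental equation with $g_3=15$; a short Newton or bisection routine returns $\cos\theta^*\approx 0.6604$, i.e.\ $\theta^*\approx 48.654^{\circ}$. A second-derivative check at $\theta^*$ confirms the critical point is a local minimum, and by monotonicity of $dL/d\theta$ on the feasible range it is the global minimum. Inverting $\tan\theta^*=0.5z_b/\sqrt{(D^*+x_b)^2+(0.5y_b)^2}$ then yields the claimed $D^*=\sqrt{(0.5z_b/\tan\theta^*)^2-(0.5y_b)^2}-x_b$. The main obstacle, I expect, will be the reduction to a single angular variable: one must argue that $d_{in,i}$ is truly $D$-independent at the extremal user (otherwise $g_4$ would survive into the stationarity equation) and that the $y$-offset $0.5y_b$ participates only through the definition of $\theta$, so that the first-order condition ends up depending solely on $\cos\theta$ and $g_3$.
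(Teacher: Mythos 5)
Your proposal is correct and follows essentially the same route as the paper: reduce to a single-variable objective $L(\theta)=w\log_{10}(\Delta h/\sin\theta)+g_3(1-\cos\theta)^2+\text{const}$, set the derivative to zero to obtain a condition equivalent to the paper's cubic $2g_3\cos^3\theta-2g_3\cos^2\theta-(\tfrac{w}{\ln 10}+2g_3)\cos\theta+2g_3=0$, verify the second-order condition, solve numerically for $\cos\theta\approx 0.66$, and invert the tangent relation via the Pythagorean theorem to recover the horizontal distance. Your write-up is somewhat more explicit about the worst-case corner and the $D$-independence of $d_{in,i}$, but the substance matches the paper's proof.
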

	  \begin{proof}
	  	In order to find the optimal horizontal point, we rewrite the equation that represents the path loss in terms of the incident angle ($\theta_{i}$) and the altitude difference between the UAV and the user $i$ ($\Delta h_{i}$):
	  	\begin{equation*}
	  	\begin{split}
	  	L_{i}(\Delta h_{i},\theta_{i})=w\log_{10}\dfrac{\Delta h_{i}}{\sin\theta_{i}}+w\log_{10}f_{Ghz}+g_{1}\\
	  	+g_{2}+g_{3}(1-\cos\theta_{i})^{2}+g_{4}d_{in,i}
	  	\end{split}
	  	\end{equation*}
	  	We know that the altitude difference between the UAV and the location that has the maximum path loss is constant for a given building. Now, when we take the first derivative with respect to $\theta$ and assign it to zero, we get:
	  	\begin{equation}
	  	\label{ercf1}
	  	\begin{split}
	  	\dfrac{dL(\theta)}{d\theta}=\dfrac{w}{ln10} 
	  	\dfrac{\dfrac{-\Delta h.\cos\theta}{\sin^{2}\theta}}{\dfrac{\Delta h}{\sin\theta}}+2g_{3}\sin\theta(1-\cos\theta)=0 ~~~~~~~~\\
	  	\dfrac{dL(\theta)}{d\theta}=\dfrac{-w}{ln10}\dfrac{\cos\theta}{\sin\theta}+2g_{3}\sin\theta(1-\cos\theta)=0~~~~~~~~~~~~~~~~ \\
	  	\dfrac{w}{ln10}\cos\theta=2g_{3}sin^{2}\theta(1-\cos\theta)~~~~~~~~~~~~~~~~~~~~~~~~~~~~~~~\\
	  	\dfrac{w}{ln10}\cos\theta=2g_{3}(1-\cos^{2}\theta)(1-\cos\theta)~~~~~~~~~~~~~~~~~~~~~~~~\\
	  	2g_{3}\cos^{3}\theta-2g_{3}\cos^{2}\theta-(\dfrac{w}{ln10}+2g_{3})\cos\theta+2g_{3}=0~~~~~~
	  	\end{split}
	  	\end{equation}
	  	To prove that the function is convex, we take the second derivative and we get:
	  	\begin{equation*}
	  	\begin{split}
	  	\dfrac{d^{2}L}{d\theta^{2}}=\dfrac{w}{ln10}\dfrac{1}{\sin^{2}\theta}+2g_{3}\cos\theta(1-\cos\theta)+2g_{3}\sin^{2}\theta>0~~~
	  	for~0<\theta\leq 90
	  	\end{split}
	  	\end{equation*}
	  	 Ecrf~\eqref{ercf1} has only one valid solution which is $\cos\theta$$=$0.6606. Therefore, the optimal incident angle between the UAV and the location that has the maximum path loss inside the building will be $48.654^{o}$.
	  	
	  	In order to find the optimal horizontal distance between the UAV and the building, we apply the pythagorean's theorem. This gives us:
	  	\begin{equation*}
	  	\begin{split}
	  	d_{H}=((\dfrac{0.5z_{b}}{tan(48.654^{o})})^2-(0.5y_{b})^2)^{0.5}
	  	\end{split}
	  	\end{equation*}
	  	Therefore, the optimal horizontal distance between the UAV and the building is given by: 
	  	\begin{equation*}
	  	\begin{split}
	  	d_{opt}=((\dfrac{0.5z_{b}}{tan(48.654^{o})})^2-(0.5y_{b})^2)^{0.5}-x_{b}
	  	\end{split}
	  	\end{equation*}
	  \end{proof}
	  
	  In Figure~\ref{fig4}, we demonstrate the transmit power required to cover the building as a function of the incident angle, we notice that the optimal angle that we characterize in Theorem 1 gives us the minimum transmit power.
	  
	    Now, we find an efficient incident angle $\theta$ for high-SHF band that minimizes the transmit power required to cover the building. In order to find an efficient angle, we rewrite the equation that represents the path loss in terms of the incident angle ($\theta$) and the altitude difference between the UAV and location that has the maximum path loss inside the building ($\Delta h$), we get:
	     \begin{equation*}
	     \begin{split}
	     L(\Delta h,\theta)=(\alpha_1+\alpha_2\log_{10}\frac{\Delta h}{\sin\theta}+\alpha_3\log_{10}f_{Ghz})+~~~\\
	     (\beta_1+\frac{\beta_2-\beta_1}{1+exp(-\beta_3(\theta_i-\beta_4))})+(\gamma_1d_{in,i})
	     \end{split}
	     \end{equation*}
	     By numerically plotting the transmit power required to cover the location that has the maximum path loss inside the building (see Figure~\ref{fig5} and Figure~\ref{fig6}), where $y_b$=50 meters  and $x_b$=20 meters, we show that for different building heights and different operating frequencies there exists only one global minimum value.
	      As can be seen from the figures, to provide wireless coverage to small buildings, the UAV transmit power must be very high, due to the high free space path loss, this demonstrates the need for multiple UAVs to cover the high rise building when we use high-SHF operating frequency. To find an efficient incident angle that could give us the global minimum value, we use the ternary search algorithm. A ternary search algorithm is a method for finding the minimum of a unimodal function, it iteratively splits the domain into three separate regions and discards the one where the minimum does not belong to. The pseudo code of this algorithm is shown in Algorithm 1. From our numerical results, we found that the angle that minimizes the power is always $15^{o}$. This is because the building penetration loss will be minimized at this angle (see Figure~\ref{fig2}). The angles less than $15^{o}$ will also give us minimum building penetration loss but the free space path loss will increase as the incident angle $\theta_i$ decreases. Note that for the high-SHF case the incident angle that results in the minimum path loss is smaller than that for low-SHF case. This is due to the fact that the building penetration loss at high operating frequency will be higher than that at low operating frequency.
	      
	     \begin{algorithm}
	     	\label{alg1}
	     	\begin{algorithmic}
	     		\STATE \textbf{Input:}
	     		\STATE The interval [$a$,$b$] of unimodal function that contains the efficient incident angle. 
	     		\STATE The absolute precision $=$$\mu$.
	     		\STATE \textbf{If} $|$$b$-$a$$|$ $<$ $\mu$:
	     		\STATE ~~~~~\textbf{Return} $\frac{(a+b)}{2}$
	     		\STATE $l$ $=$ $a$$+$$\frac{(b-a)}{3}$
	     		\STATE $r$ $=$ $b$$-$$\frac{(b-a)}{3}$
	     		\STATE \textbf{If} $f(l)$ $>$ $f(r)$
	     		\STATE ~~~~~\textbf{Return} ternary$\_$search($f$, $l$, $b$, $\mu$) 
	     		\STATE \textbf{Else}
	     		\STATE ~~~~~\textbf{Return} ternary$\_$search($f$, $a$, $r$, $\mu$) 
	     	\end{algorithmic}
	     	\caption{Ternary search algorithm}
	     \end{algorithm}
\textbf{Case 2.} \textit{The locations of indoor users are symmetric across the $xy$ and $xz$ planes:}
In this case, we assume that the locations of indoor users are symmetric across the $xy$-plane  ((0,0,0.5$z_b$),($x_b$,0,0.5$z_b$) ,($x_b$,$y_b$,0.5$z_b$),(0,$y_b$,0.5$z_b$))) and the $xz$-plane ((0,0.5$y_b$,0), ($x_b$,0.5$y_b$,0), ($x_b$,0.5$y_b$,$z_b$),(0,0.5$y_b$,$z_b$)). First, we prove that $z_{UAV}$=$0.5z_{b}$ and $y_{UAV}$=$0.5y_{b}$ when the locations of indoor users are symmetric across the $xy$ and $xz$ planes and the operating frequency is low-SHF (Theorem 2) or high-SHF (Theorem 3). Then we use the gradient descent algorithm to find an efficient $x_{UAV}$ that minimizes the transmit power required to cover the building.

   \begin{theorem}
   	For the low-SHF operating frequency case, when the locations of indoor users are symmetric across the $xy$ and $xz$ planes, the optimal ($y_{UAV}$,$z_{UAV}$) that minimizes the power required to cover the indoor users will be equal ($0.5y_{b}$,$0.5z_{b}$).
   \end{theorem}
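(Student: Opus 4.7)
The plan is to exploit the double mirror symmetry of the user set across the planes $y = 0.5y_b$ and $z = 0.5z_b$, and to show that $L_{Total}$ is minimized at the corresponding building center for any fixed $x_{UAV}$.

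First, I would group the indoor users into orbits under these two reflections: the hypothesis implies that for each user at $(x_i, y_i, z_i)$ the set also contains $(x_i, y_b - y_i, z_i)$, $(x_i, y_i, z_b - z_i)$, and $(x_i, y_b - y_i, z_b - z_i)$, with smaller orbits when $y_i = 0.5y_b$ or $z_i = 0.5z_b$. Writing $L_{Total}$ as a sum over orbits reduces the theorem to showing that each orbit contribution is minimized at $(y_{UAV}, z_{UAV}) = (0.5y_b, 0.5z_b)$.

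Next, introducing the centered coordinates $u = y_{UAV} - 0.5y_b$ and $v = z_{UAV} - 0.5z_b$, I would observe that $d_{out,i}^2 = A_i + (u \pm \tilde y_i)^2 + (v \pm \tilde z_i)^2$ where $\tilde y_i = 0.5y_b - y_i$, $\tilde z_i = 0.5z_b - z_i$ and $A_i = (x_{UAV} - x_i)^2$, so the four squared distances across an orbit are merely permuted under the sign flips $u \to -u$ and $v \to -v$. The same symmetry passes to the incident angles $\theta_i$ through the relation $\sin\theta_i = \Delta h_i / d_{out,i}$ that appears in the proof of Theorem 1, while the indoor term $g_4 d_{in,i}$ is independent of the UAV position for fixed $x_{UAV}$. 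Consequently each orbit contribution, and hence $L_{Total}$, is an even function of $u$ and of $v$ separately, which forces both partial derivatives to vanish at $(u, v) = (0, 0)$.

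The main obstacle is upgrading this critical point to a global minimum. My plan is to analyze each orbit contribution along the $u$-axis and the $v$-axis through the origin and to establish convexity at $0$ by direct second-derivative computation: summing the free-space pieces over an orbit yields $(w/2)\log_{10}$ of the product $\prod_{\text{orbit}} d_{out,i}^2$, which expands to a polynomial in $u^2, v^2$ whose Hessian at the origin can be checked to be positive semi-definite after the pairing-induced cancellations, and the penetration correction $g_3(1 - \cos\theta_i)^2$ contributes a non-negative convex piece because $\theta_i$ is smallest at the orbit center. Summing over all orbits then gives a positive definite Hessian at $(u, v) = (0, 0)$, which combined with the evenness symmetry along each axis makes the center the global minimum.
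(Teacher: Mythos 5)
Your orbit/evenness argument for the first-order condition is sound and is essentially a cleaner packaging of what the paper actually does: Appendix A computes $dL_{Total}/dz_{UAV}$ explicitly, splits the users into those above and below the UAV, and observes that the resulting terms cancel in pairs when $z_{UAV}=0.5z_b$ under the stated symmetry (the $y_{UAV}$ coordinate being handled by the same argument). Your observation that, for fixed $x_{UAV}$, $L_{Total}$ is an even function of $u=y_{UAV}-0.5y_b$ and of $v=z_{UAV}-0.5z_b$ separately --- because reflections permute $d_{out,i}$ and $\theta_i$ within each orbit while $d_{in,i}$ is unaffected --- reaches the same conclusion without the explicit derivative computation, and is arguably tidier.

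The genuine gap is in your final step. A positive-definite Hessian at $(u,v)=(0,0)$ combined with evenness in $u$ and in $v$ yields only a \emph{local} minimum: an even function can have a strict local minimum at the origin and a strictly lower value elsewhere (e.g.\ $f(u)=1-\cos u-\tfrac{1}{10}u^2$ on a large interval), so ``evenness plus convexity at $0$'' does not make the center a global minimum. Nor do your ingredients extend beyond the origin: $\frac{d^2}{dv^2}\log(A+(v\pm\tilde z_i)^2)$ is proportional to $A-(v\pm\tilde z_i)^2$ and changes sign, so the free-space sum is not convex along the axes, and the claim that $\theta_i$ is smallest at the orbit center is false member-by-member --- moving $v$ off center decreases the incident angle for the orbit members on one side of the UAV while increasing it for those on the other, so even the local convexity of the orbit-summed penetration term requires a computation you have not done. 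To obtain global optimality you would need something stronger, such as monotonicity of each orbit sum in $|u|$ and $|v|$ or uniqueness of the critical point. In fairness, the paper's own proof stops at the vanishing of the first derivative and never addresses second-order or global optimality, so your attempt is no less complete than the published argument; but the extra step you promise is not delivered by the reasoning you sketch.
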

   	The proof is presented in Appendix A. The question now is how to find an efficient horizontal point $x_{UAV}$ that minimizes the total transmit power.
 In order to find this point, we use the gradient descent algorithm~\cite{sutton1998reinforcement}:
 \begin{equation*}
 \begin{split}
 x_{UAV,n+1}=x_{UAV,n}-a\dfrac{dL_{Total}}{dx_{UAV,n}}\\
 \end{split}
 \end{equation*}
 Where:\\
 \begin{equation*}
 \begin{split}
 \dfrac{dL_{Total}}{dx_{UAV}}=\sum_{i=1}^{M}\dfrac{w}{ln10}\dfrac{-(x_i-x_{UAV})}{d_{out,i}^2}+
 2g_{3}.(1-\dfrac{((x_i-x_{UAV})^2+(y_i-y_{UAV})^2)^{0.5}}{d_{out,i}}).~~~~~~~~~~~~~~~~~\\
 (\dfrac{(x_i-x_{UAV})d_{out,i}((x_i-x_{UAV})^2+(y_i-y_{UAV})^2)^{-0.5}}{{d_{out,i}^2}}-~~~~~~~~~~~~~~~~~~~~~~~~~~~~~~~~~~~\\
 \dfrac{((x_i-x_{UAV})^2+(y_i-y_{UAV})^2)^{0.5}(x_i-x_{UAV})d_{out,i}^{-1}}{d_{out,i}^2})~~~~~~~~~~~~~~~~~~~~~~~~~~~~~~~~~~~~~~~
 \end{split}
 \end{equation*}
 $a$: the step size.\\
 $d_{out,i}$=$((x_i-x_{UAV})^2+(y_i-y_{UAV})^2+(z_i-z_{UAV})^2)^{0.5}$
 
 The pseudo code of this algorithm is shown in Algorithm 2. Now, we prove that $z_{UAV} =0.5z_b$ and $y_{UAV} =0.5y_b$ when the locations of indoor users are symmetric across the xy and xz planes and the operating frequency is high-SHF. 
 
 \begin{algorithm}
 	\label{alg2}
 	\begin{algorithmic}
 		\STATE \textbf{Input:}
 		\STATE The 3D locations of the users inside the building.
 		\STATE The step size $a$, the step tolerance $\epsilon$.
 		\STATE The dimensions of the building  $[0,x_b]$ $\times$ $[0,y_b]$ $\times$ $[0,z_b]$.
 		\STATE The maximum number of iterations $N_{max}$.
 		\STATE \textbf{Initialize} $x_{UAV}$
 		\STATE \textbf{For} $n$=1,2,..., $N_{max}$
 		\STATE  ~~~~~$x_{UAV,n+1}$ $\leftarrow$ $x_{UAV,n}$$-$ $a\dfrac{dL_{Total}}{dx_{UAV,n}}$
 		\STATE ~~~~~~~~~~~~~\textbf{If} $\lVert$ $x_{UAV,n}$ $-$ $x_{UAV,n+1}$ $\rVert$ $<$ $\epsilon$
 		\STATE ~~~~~\textbf{Return:} \textbf{$x_{UAV,opt}$} $=$ $x_{UAV,n+1}$
 		\STATE \textbf{End for}
 	\end{algorithmic}
 	\caption{Efficient $x_{UAV}$ using gradient descent algorithm}
 \end{algorithm}

  \begin{theorem}
  	For the high-SHF operating frequency case, when the locations of indoor users are symmetric across the $xy$ and $xz$ planes, the optimal ($y_{UAV}$, $z_{UAV}$) that minimizes the power required to cover the indoor users will be equal ($0.5y_{b}$,$0.5z_{b}$).
  \end{theorem}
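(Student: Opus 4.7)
The plan is to follow the same symmetry-based strategy used for the low-SHF case in Theorem 2, but applied to the high-SHF path loss expression. I would first decompose $L_i = L_F + L_B + L_I$ and note that the indoor term $\gamma_1 d_{in,i}$ is independent of the UAV coordinates and drops out of the optimization. The free-space term $\alpha_1 + \alpha_2 \log_{10} d_{out,i} + \alpha_3 \log_{10} f_{Ghz}$ depends on the UAV only through $d_{out,i}$, while the penetration term $\beta_1 + (\beta_2-\beta_1)/(1+\exp(-\beta_3(\theta_i-\beta_4)))$ depends on it only through the incident angle $\theta_i$. So the objective reduces to $\sum_i \bigl[\alpha_2 \log_{10} d_{out,i} + (\beta_2-\beta_1)/(1+\exp(-\beta_3(\theta_i-\beta_4)))\bigr]$ plus constants.

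The key observation is that the user set is invariant under the two reflections $y\mapsto y_b - y$ and $z\mapsto z_b - z$, so for any user at $(x_i,y_i,z_i)$ the three partners $(x_i,y_b-y_i,z_i)$, $(x_i,y_i,z_b-z_i)$, and $(x_i,y_b-y_i,z_b-z_i)$ are also users; the users therefore partition into orbits of size at most four. If I replace $y_{UAV}$ by $y_b - y_{UAV}$ (holding $x_{UAV}$ and $z_{UAV}$ fixed), a short coordinate computation shows that the slant distance from the reflected UAV to user $i$ equals the slant distance from the original UAV to its $y$-partner $j$, and the same is true for the incident angle. Thus each orbit's contribution to $L_{Total}$ is permuted within the orbit, and $L_{Total}(x_{UAV},\cdot,z_{UAV})$ is invariant under $y_{UAV}\mapsto y_b - y_{UAV}$. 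The same holds for the $z$-reflection. Differentiating both sides of the symmetry identity in $y_{UAV}$ (resp.\ $z_{UAV}$) and evaluating at the fixed point forces $\partial L_{Total}/\partial y_{UAV}$ and $\partial L_{Total}/\partial z_{UAV}$ to vanish at $(0.5y_b,0.5z_b)$, so this is a critical point.

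To upgrade the critical point to a minimum I would mimic the second-order argument from Appendix A, working pair-by-pair. For a pair of users reflected across the $xz$-plane, I would show that $L_i+L_j$, viewed as a function of $y_{UAV}$ alone, is symmetric about $y_{UAV}=0.5y_b$ and unimodal; summing over pairs then gives a minimum at $y_{UAV}=0.5y_b$, and the analogous argument in $z$ closes the theorem. The hard part will be verifying unimodality in the presence of the sigmoid $(\beta_2-\beta_1)/(1+\exp(-\beta_3(\theta_i-\beta_4)))$, because $\theta_i$ is itself a non-elementary function of the UAV coordinates and the sigmoid is neither convex nor concave over the full range of $\theta_i \in (0,\pi/2)$. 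I expect to handle this by combining the monotonicity of the sigmoid in $\theta_i$ with the geometric observation that as $y_{UAV}$ moves away from $0.5y_b$ the slant distance and the incident angle of at least one member of each pair strictly increase while the other's decrease is dominated, yielding a positive second derivative at the symmetric point; if a fully analytic argument becomes unwieldy for the numerical parameters $(\beta_1,\beta_2,\beta_3,\beta_4)=(6.8,21.8,0.453,19.7)$ specified in Section~\ref{subsec:Outdoor-Indoor Path Loss Model}, the second-derivative test can be verified directly at the critical point for the relevant parameter regime.
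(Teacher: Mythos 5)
Your proposal is correct where it overlaps with the paper, but it is packaged differently and goes further. The paper's Appendix~B proof is purely first-order: it splits the users into those above and below the UAV, writes out $d_{out,i}$ and $\theta_i$ explicitly, computes $dL_{Total}/dz_{UAV}$ term by term, and then asserts that the sum vanishes at $z_{UAV}=0.5z_b$ because symmetric users contribute cancelling terms (the $y$ case is left implicit). Your reflection-invariance argument --- showing $L_{Total}$ is invariant under $y_{UAV}\mapsto y_b-y_{UAV}$ and $z_{UAV}\mapsto z_b-z_{UAV}$ because these reflections permute the orbit of each user, then differentiating the identity at the fixed point --- reaches the same stationarity conclusion more cleanly, handles both coordinates at once, and avoids the lengthy explicit derivative of the sigmoid term. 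Where you go beyond the paper is in recognizing that stationarity alone does not establish a minimum; the paper stops at $dL_{Total}/dz_{UAV}=0$ and never addresses the second-order condition. Be aware, though, that the ``second-order argument from Appendix~A'' you propose to mimic does not exist --- Appendix~A likewise ends at the first-order condition for the low-SHF case --- so you would be supplying that analysis from scratch. Your own caveats about the sigmoid $(\beta_2-\beta_1)/(1+\exp(-\beta_3(\theta_i-\beta_4)))$ being neither convex nor concave are well placed: a fully analytic unimodality proof per symmetric pair would be genuinely delicate, and falling back on a numerical second-derivative check at the critical point for the stated parameter values is a reasonable (if less satisfying) way to close that gap. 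In short, your plan is at least as rigorous as the paper's own proof on the part the paper proves, and more honest about what remains.
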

 
  	 The proof is presented in Appendix B. To find an efficient horizontal point $x_{UAV}$ that minimizes the total transmit power, we use the gradient descent algorithm, where:\\
  \begin{equation*}
  \begin{split}
  \dfrac{dL_{Total}}{dx_{UAV}}=\sum_{i=1}^{M}\dfrac{\alpha_2}{ln10}\dfrac{(x_{UAV}-x_i)}{d_{out,i}^2}+
  (\dfrac{-(\beta_2-\beta_1)(\frac{-\beta_3}{\sqrt{1-u^2}})(\frac{-(z_{UAV}-z_i)(x_{UAV}-x_i)}{d_{out,i}^3})}{(1+exp(-\beta_3(sin^{-1}u-\beta_4)))}.~~~~~~~~~~~~~~~~~~\\
  \dfrac{exp(-\beta_3(sin^{-1}u-\beta_4))}{(1+exp(-\beta_3(sin^{-1}u-\beta_4)))})~~~~~~~~~~~~~~~~~~~~~~~~~~~~~~~~~~~~~~~~~~~~~~~~~~~~~~~~~~~~
  \end{split}
  \end{equation*}
  $d_{out,i}$=$((x_i-x_{UAV})^2+(y_i-y_{UAV})^2+(z_i-z_{UAV})^2)^{0.5}$\\
  $u$=$(\dfrac{(z_{UAV}-z_i)}{((x_{UAV}-x_i)^2+(y_{UAV}-y_i)^2+(z_{UAV}-z_i)^2)^{0.5}})$\\
 
 \textbf{Case 3.} \textit{The locations of
 	indoor users are uniformly distributed in each floor:}
 In this case, we propose the Particle Swarm Optimization (PSO)~\cite{kennedy1995particle} to find an efficient 3D placement of the UAV, when the locations of indoor users are uniformly distributed in each floor.
 
 The particle swarm optimization algorithm starts with (npop) random solutions and iteratively tries to improve the candidate solutions based on the best experience of each candidate (particle(i).best.location) and the best global experience (globalbest.location). In each iteration, the best location for each particle (particle(i).best.location) and the best global location (globalbest.location) are updated and the velocities and locations of the particles are calculated based on them~\cite{kalantari2016number}. The velocity value indicates how much the location can be changed (see ecrf~\eqref{ercf2}). The velocity is given by:
 \begin{equation*}
 \begin{split}
 particle(i).velocity=w*particle(i).velocity+~~~~~~~~~~~~~~~~~~~~~~~~~~\\c_1*rand(varsize)*(particle(i).best.location
 -particle(i).location)\\+c_2*rand(varsize)*
 (globalbest.location-particle(i).location)~~~~~
 \end{split}
 \end{equation*}
 where $w$ is the inertia weight, $c_1$ and $c_2$ are the personal and global learning coefficients, and $rand(varsize)$ are random positive numbers. Also, the location of each particle is updated as:
 \begin{equation}
 \begin{split}  
 \label{ercf2}
 particle(i).location=particle(i).location
 +particle(i).velocity
 \end{split}
 \end{equation}

 The pseudo code of the PSO algorithm is shown in Algorithm 3. Convergence of the candidate solutions has been investigated for PSO~\cite{clerc2002particle}. This analyses has resulted in guidelines for selecting a set of coefficients ($\kappa$,$\phi_1$,$\phi_2$) that are believed to cause convergence to a point and prevent divergence of the swarm’s particles. We selected our parameters according to this analysis (see Table~\ref{tableone} and Algorithm 3).

 \begin{algorithm}
 	\label{alg3}
 	\begin{algorithmic}
 		\STATE \textbf{Input:}
 		\STATE The lower and upper bounds of decision variable (varmin,varmax), Construction coefficients ($\kappa$,$\phi_1$,$\phi_2$), Maximum number of iterations (maxit), Population size (npop)
 		\STATE \textbf{Initialiaztion:}
 		\STATE $\phi$=$\phi_1$+$\phi_1$, $\chi$ = ${2\kappa}/{|2-\phi-(\phi^2-4\phi)^{0.5}|}$
 		\STATE $w$=$\chi$, $c_1$=$\chi$$\phi_1$, $c_2$=$\chi$$\phi_2$, globalbest.cost=inf
 		\STATE \textbf{for} i=1:npop
 		\STATE ~~~~~particle(i).location=unifrnd(varmin, varmax, varsize)
 		\STATE~~~~~particle(i).velocity=zeros(varsize)
 		\STATE~~~~~particle(i).cost=costfunction(particle(i).location)
 		\STATE~~~~~particle(i).best.location=particle(i).location
 		\STATE~~~~~particle(i).best.cost=particle(i).cost
 		\STATE~~~~~\textbf{if} particle(i).best.cost $<$ globalbest.cost
 		\STATE~~~~~~~~globalbest=particle(i).best
 		\STATE~~~~~\textbf{end if}
 		\STATE  \textbf{end}
 		\STATE \textbf{PSO Loop:}
 		\STATE \textbf{for} t=1:maxit
 		\STATE~~~~~\textbf{for }i=1:npop
 		\STATE~~~~~~~~particle(i).velocity=$w$*particle(i).velocity+
 		\STATE ~~~~~~~~$c_1$*rand(varsize)*(particle(i).best.location-
 		particle(i).location)+
 		\STATE ~~~~~~~~$c_2$*rand(varsize)*(globalbest.location-particle(i).location)
 		\STATE~~~~~~~~particle(i).location=particle(i).location+
 		particle(i).velocity
 		\STATE~~~~~~~~particle(i).cost=costfunction(particle(i).location)
 		\STATE~~~~~~~~~~ \textbf{if} particle(i).cost $<$ particle(i).best.cost
 		\STATE~~~~~~~~~~~~~~particle(i).best.location = particle(i).location
 		\STATE~~~~~~~~~~~~~~particle(i).best.cost = particle(i).cost
 		\STATE~~~~~~~~~~~~~~~~~~\textbf{if} particle(i).best.cost $<$ globalbest.cost
 		\STATE~~~~~~~~~~~~~~~~~~~~~~globalbest=particle(i).best
 		\STATE~~~~~~~~~~~~~~~~~~\textbf{end if}
 		\STATE~~~~~~~~~~ \textbf{end if}
 		\STATE ~~~~~\textbf{end}
 		\STATE\textbf{end}
 		
 	\end{algorithmic}
 	\caption{Efficient UAV placement using PSO algorithm}
 \end{algorithm}
 
  \section{PROVIDING WIRELESS COVERAGE USING MULTIPLE UAVs}
  Providing wireless coverage to High-rise building using a single UAV can be impractical, due to the limited transmit power of a UAV. The transmit power required to cover the building is too high. It is in the range of 50dBm to 65dBm (see Figures 3, 5 and 6), which corresponds to 100-3000 watts. 
  
 Our problem can be formulated as:
  \begin{equation}
  \begin{array}{ll}
  \displaystyle \min |k| \\
  \textrm{subject to}\\
  \displaystyle \sum_{j=1}^{|k|} y_{ij}=1~~~~~~~~~~~~~~~~~~~~~~~~~~~~~~~~~~~ \forall i \in m &(3.a) \\
  \displaystyle \sum_{i=1}^{|m|}(2^{\frac{v.|m|}{B}}-1).N.L_{ij}.y_{ij} \leq P~~~~~~~~~~~\forall j \in k &(3.b) \\
  x_{min}\leq x_{j}\leq x_{max}~~~~~~~~~~~~~~~~~~~~~~~~~\forall j \in k&(3.c)\\
  y_{min}\leq y_{j}\leq y_{max}~~~~~~~~~~~~~~~~~~~~~~~~~~\forall j \in k&(3.d)\\
  z_{min}\leq z_{j}\leq z_{max}~~~~~~~~~~~~~~~~~~~~~~~~~~\forall j \in k&(3.e)\\
  \end{array}
  \end{equation}
  where $k$ is a set of fully charged UAVs, $m$ is a set of indoor users, $\upsilon$ is the rate requirement for each user (constant), $N$ is the noise power (constant), $B$ is the transmission bandwidth (constant), $L_{ij}$ is the total path loss between UAV $j$ and user $i$ and $P$ is the maximum transmit power of UAV (constant). We also introduce the binary variable $y_{ij}$ that takes the value of 1 if the indoor user $i$  is connected to the UAV $j$ and equals 0 otherwise. The objective is to minimize the number of UAVs that are needed to provide a wireless coverage for indoor users. Constraint set (3.a) ensure that each indoor user should be connected to one UAV. Constraint set (3.b) ensure that the total power consumed by a UAV should not exceed its maximum power consumption limit. Constraints (3.c-3.e) represent the minimum and maximum allowed values for $x_j$, $y_j$ and $z_j$.
  
  \begin{theorem}
  	The problem represented by (3) is NP-complete.
  \end{theorem}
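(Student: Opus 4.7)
The plan is to establish NP-completeness via the standard two-step approach: first show the decision version of the problem is in NP, then show NP-hardness by a polynomial-time reduction from a known NP-hard problem. A natural candidate is \textit{Bin Packing}, because constraint $(3.b)$ already has bin-packing structure: each UAV $j$ behaves like a bin with capacity $P$, and every user $i$ assigned to it (with $y_{ij}=1$) contributes a load proportional to the path loss $L_{ij}$.

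For NP membership, a certificate consists of the integer $|k|$, the 3D positions of the UAVs, and the binary matrix $(y_{ij})$. Verification amounts to checking that (i) every user is covered exactly once per $(3.a)$, (ii) the weighted sum in $(3.b)$ is bounded by $P$ for each UAV after evaluating the path loss expressions from Section~\ref{subsec:Outdoor-Indoor Path Loss Model}, and (iii) each UAV respects the box constraints $(3.c)$--$(3.e)$. All of these checks take time polynomial in $|m|$ and $|k|$, so the problem lies in NP.

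For NP-hardness, I would reduce from Bin Packing. Given items of sizes $s_1,\dots,s_n$ and bin capacity $C$, the construction places $n$ indoor users inside a small sub-region of the building, with their in-wall distances chosen so that $g_{4}\,d_{in,i}$ is proportional to $s_i$. Because all users are clustered tightly relative to the admissible UAV placement region, the free-space and building-penetration terms $L_F+L_B$ are essentially identical across users for any feasible UAV position and act as a fixed additive constant $L_0$ per assigned user. Then for any UAV $j$, the left-hand side of $(3.b)$ reduces, up to the common multiplicative factor $(2^{vM/B}-1)N$, to $\sum_{i:\,y_{ij}=1}(L_0 + c\,s_i)$. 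Choosing $P$ so that the effective per-bin budget equals $C$ (after absorbing $L_0$ into an effective item size $L_0 + c\,s_i$) makes this exactly a bin-packing constraint. Consequently, the Bin Packing instance admits a packing with $K$ bins if and only if the constructed UAV instance is feasible with $K$ UAVs, which transfers NP-hardness.

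The chief obstacle is controlling the continuous freedom in the UAV location $(x_j,y_j,z_j)$: a solver might in principle shift a UAV to lower $L_{ij}$ asymmetrically for a chosen subset of users, breaking the correspondence with the bin-packing instance. The construction must therefore cluster the users tightly enough that differences in $d_{out,i}$ and $\theta_i$ across users are dominated by the engineered $g_{4}\,d_{in,i}$ terms, so that only the in-wall distances drive the partitioning regardless of where the UAV is placed. A secondary technical matter is that the path-loss expressions involve $\log_{10}$ and other transcendental quantities; the reduction should therefore be stated either in the real-RAM computational model or with inputs scaled and rounded to polynomial precision, so that all load evaluations and verifications remain polynomial-time computable.
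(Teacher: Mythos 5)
Your overall strategy coincides with the paper's: membership in NP by polynomial-time verification of a certificate (the assignment matrix plus the UAV positions), and NP-hardness by reduction from Bin Packing with UAVs playing the role of bins of capacity $P$ and users playing the role of items whose size is the power needed to serve them. Where you differ is in how the reduction is instantiated. The paper simply restricts to a ``special case'' in which the power required to cover user $i$ is declared constant from every 3D location; it never exhibits user positions and a path-loss geometry that actually realize such an $L_{ij}$, so its reduction is effectively to an abstract assignment problem rather than to a bona fide instance of (3). You instead try to engineer a genuine geometric instance, encoding item sizes through the linear indoor-loss term $g_4\, d_{in,i}$ and arguing that the $\log_{10} d_{out,i}$ and incident-angle contributions can be made uniform across users; this is the more faithful route and it also surfaces the precision issue with the transcendental path-loss evaluations, which the paper ignores. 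One tension in your sketch to resolve: clustering the users tightly would also flatten the $d_{in,i}$ values that must differ to encode the item sizes, so the uniformity of $L_F+L_B$ should come not from shrinking the cluster alone but from pushing the admissible UAV region far from the building, so that the logarithmic and angular terms vary by $O(\mathrm{spread}/D)$ while the linear $d_{in,i}$ term retains its $O(\mathrm{spread})$ differences; combining this with the strong NP-hardness of Bin Packing (polynomially bounded integer sizes) lets the residual bounded error be absorbed after scaling. With that detail supplied, your argument is a strengthening of, rather than a departure from, the paper's proof.
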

  \begin{proof}
  	The number of constraints is polynomial in terms of
  	the number of indoor users, UAVs and 3D locations. Given any solution for our problem, we can check the solution’s feasibility in polynomial time, then the problem is NP.\\
  	To prove that the problem is NP-hard, we reduce the Bin Packing Problem which is NP-hard~\cite{korf2002new} to a special case of	our problem. In the Bin Packing Problem, we have a set of items $G=\{1,2,..,N\}$ in which each item has volume $z_n$ where $n\in G$. All items must be packed into a finite number of bins ($b_1$, $b$,...,$b_B$), each of volume $V$ in a way that minimizes the number of bins used. The reduction steps are: 1) The $b$-th bin in the Bin Packing Problem is mapped to the $j$-th UAV in our problem, where the volume $V$ for each bin is mapped to the maximum transmit power of the UAV $P$. 2) The $n$-th item is mapped to the indoor $i$-th user, where the volume for each item $n$ is mapped to the power required to cover the $i$-th indoor user. 3) All UAVs have the same maximum transmit power $P$. 4) The power required to cover the $i$-th indoor user from any 3D location will be constant. If there exists a solution to the bin packing problem with cost $C$, then the selected bins will represent the UAVs that are selected and the items in each bin will represent the indoor users that the UAV must cover and the total cost of our problem is $C$.
  \end{proof}
  Due to the intractability of the problem, we study clustering indoor users. In the $k$-means clustering algorithm~\cite{ng2000cs229}, we are given a set of points $m$, and want to group the points into a $k$ clusters such that each point belongs to the cluster with the nearest mean. The first step in the algorithm is to choose the number of clusters $k$. Then, randomly initialize $k$ clusters centroids. In each iteration, the algorithm will do two things:1) Cluster assignment step. 2) Move centroids step. In cluster assignment step, the algorithm goes through each point and chooses the closest centroids and assigns the point to it. In move centroids step, the algorithm calculates the average for each group and moves the centroids there. The algorithm will repeat these two steps until it converges. The algorithm will converge when the assignments no longer change. To find the minimum number of UAVs required to cover the indoor users, we utilize this algorithm to cluster the indoor users. In our algorithm, we assume that each cluster will be covered by only one UAV. We start the algorithm with $k=2$ and after it finishes clustering the indoor users, it applies the particle swarm optimization~\cite{kennedy1995particle} to find the UAV 3D location and UAV transmit power needed to cover each cluster. Then, it checks if the maximum transmit power is sufficient to cover each cluster, if not, the number of clusters $k$ is incremented by one and the  problem is solved again. The pseudo code of this algorithm is shown in Algorithm 4.
  
  \begin{figure*}[!t]
  	\centering
  	\includegraphics[scale=0.115]{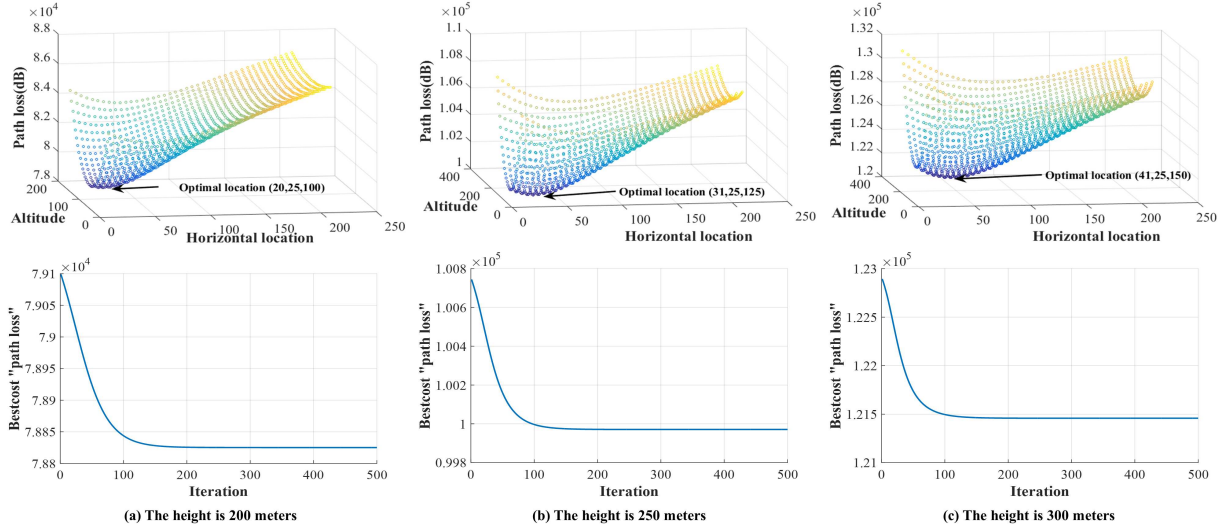}
  	\caption{UAV optimal placement (upper part)  and convergence speed of the GD algorithm (lower part) for different building heights, $f_c=2G~Hz$}
  	\label{fig7}
  \end{figure*}
  
  \begin{figure*}[!t]
  	\centering
  	\includegraphics[scale=0.115]{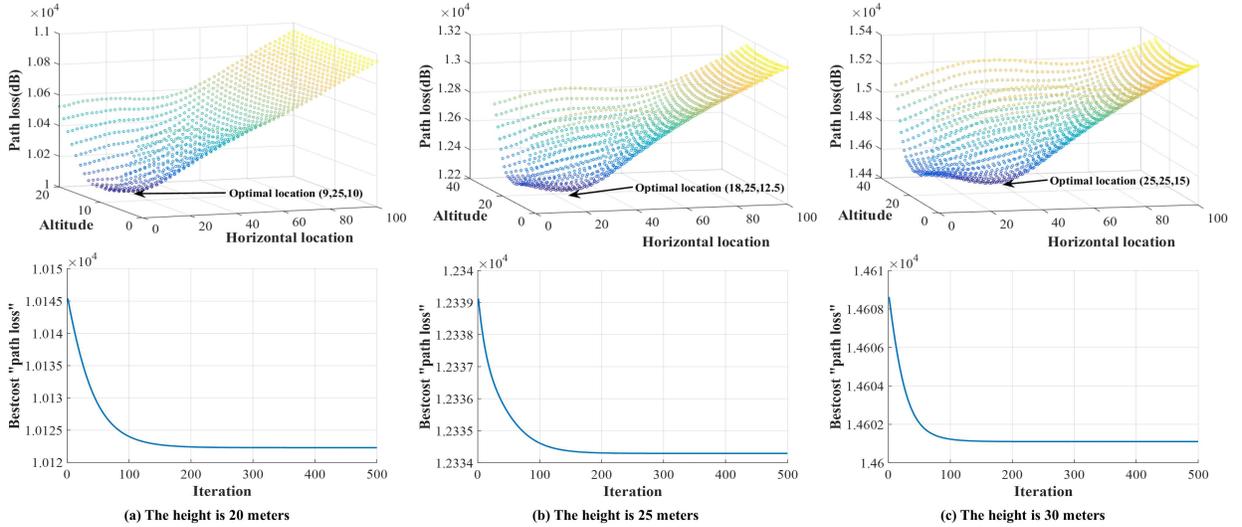}
  	\caption{UAV optimal placement (upper part)  and convergence speed of the GD algorithm (lower part) for different building heights, $f_c=15G~Hz$}
  	\label{fig8}
  \end{figure*}
   \begin{algorithm}
   	\label{alg4}
   	\begin{algorithmic}
   		\STATE \textbf{Input:}
   		\STATE The maximum transmit power of UAV $(P)$.
   		\STATE The 3D locations of $m$ indoor users $(x_i,y_i,z_i)$.
   		\STATE Number of clusters $(|k|=2)$. 
   		\STATE \textbf{START:}
   		\STATE 1: Initialize cluster centroids $\gamma_1,\gamma_2,...,\gamma_k\in R^{n}$ randomly.
   		\STATE 2: Repeat until convergence: 
   		\STATE ~~~~~For every indoor user $i\in m$, set \STATE ~~~~~~~~~~$c^{(i)}=arg~\min\limits_{j\in k}||(x_i,y_i,z_i)-\gamma_j||^2$
   		\STATE ~~~~~For each cluster $j\in k$, set
   		\STATE ~~~~~~~~~~$\gamma_j=\frac{\displaystyle\sum_{i\in m,c^{(i)}=j}(x_i,y_i,z_i)}{\displaystyle\sum_{i\in m,c^{(i)}=j}1}$
   		\STATE 3: Using particle swarm optimization algorithm, calculate the UAV efficient 3D location and the transmit power for each cluster $j\in k$:
   		\STATE ~~~~~$P(j)=\displaystyle\sum_{i\in m,c^{(i)}=j}(2^{\frac{v.|m|}{B}}-1)\star N\star L_i$ 
   		\STATE 4: \textbf{For} $j=1~to~|k|$
   		\STATE ~~~~~ \textbf{If} ($P(j)>P$)
   		\STATE ~~~~~~~~~~$|k|=|k|+1$
   		\STATE ~~~~~ $\textbf{go to}~~\textbf{START}$
   		\STATE ~~~~\textbf{End}
   		\STATE \textbf{Output:}
   		\STATE $|k|$ Clusters.
   		\STATE The transmit Power of each UAV.
   		\STATE The 3D locations of UAVs.
   	\end{algorithmic}
   	\caption{Clustering Indoor Users}
\end{algorithm}
\vspace{3mm}
 \section{Numerical Results}
 \label{sec:results}
 \subsection{Simulation results for single UAV }
  First, we verify our results for the second case, when the locations of indoor users are symmetric across the $xy$ and $xz$ planes, using different operating frequencies, 2GHz for low-SHF band and 15GHz for high-SHF. We assume that each floor contains 20 users. Then we apply the gradient descent (GD) algorithm to find the optimal horizontal point $x_{UAV}$ that minimizes the transmit power required to cover the indoor users. Table~\ref{tableone} lists the parameters used in the numerical analysis for single UAV cases.
 \begin{table}[!h]
 	\scriptsize
 	\renewcommand{\arraystretch}{1.3}
 	\caption{Parameters in numerical analysis for single UAV}
 	\label{tableone}
 	\centering
 	\begin{tabular}{|c|c|}
 		\hline
 		Vertical width of building $y_{b}$ & 50 meters\\
 		\hline
 		Hight of each floor & 5 meters\\
 		\hline
 		Step size $a$ "GD algorithm"& 0.01\\
 		\hline
 		Maximum number of iterations $N_{max}$ "GD algorithm"& 500\\
 		\hline 
 		The carrier frequency $f_{Ghz}$, low-SHF & 2Ghz\\
 		\hline 
 		The carrier frequency $f_{Ghz}$, high-SHF & 15Ghz\\
 		\hline 
 		Number of users in each floor & 20 users\\
 		\hline
 		(varmin,varmax) "PSO algorithm"&(0,1000)\\
 		\hline
 		($\kappa$,$\phi_1$,$\phi_2$) "PSO algorithm" & (1,2.05,2.05)\\
 		\hline
 	\end{tabular}
 \end{table}

 \begin{figure*}[!t]
 	\centering
 	\includegraphics[scale=0.115]{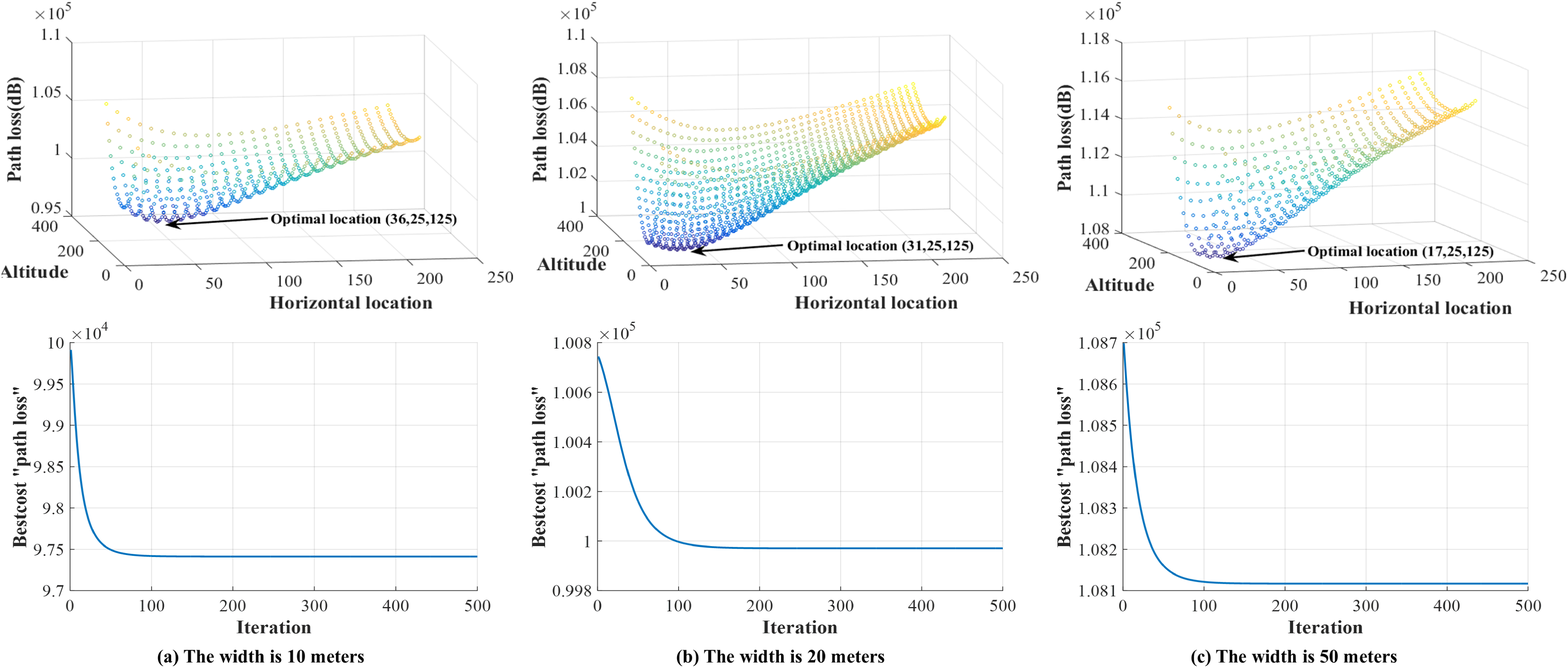}
 	\caption{UAV optimal placement (upper part)  and convergence speed of the GD algorithm (lower part) for different building widths, $f_c=2G~Hz$ }
 	\label{fig9}
 \end{figure*}
 
  \begin{figure*}[!t]
  	\centering
  	\includegraphics[scale=0.115]{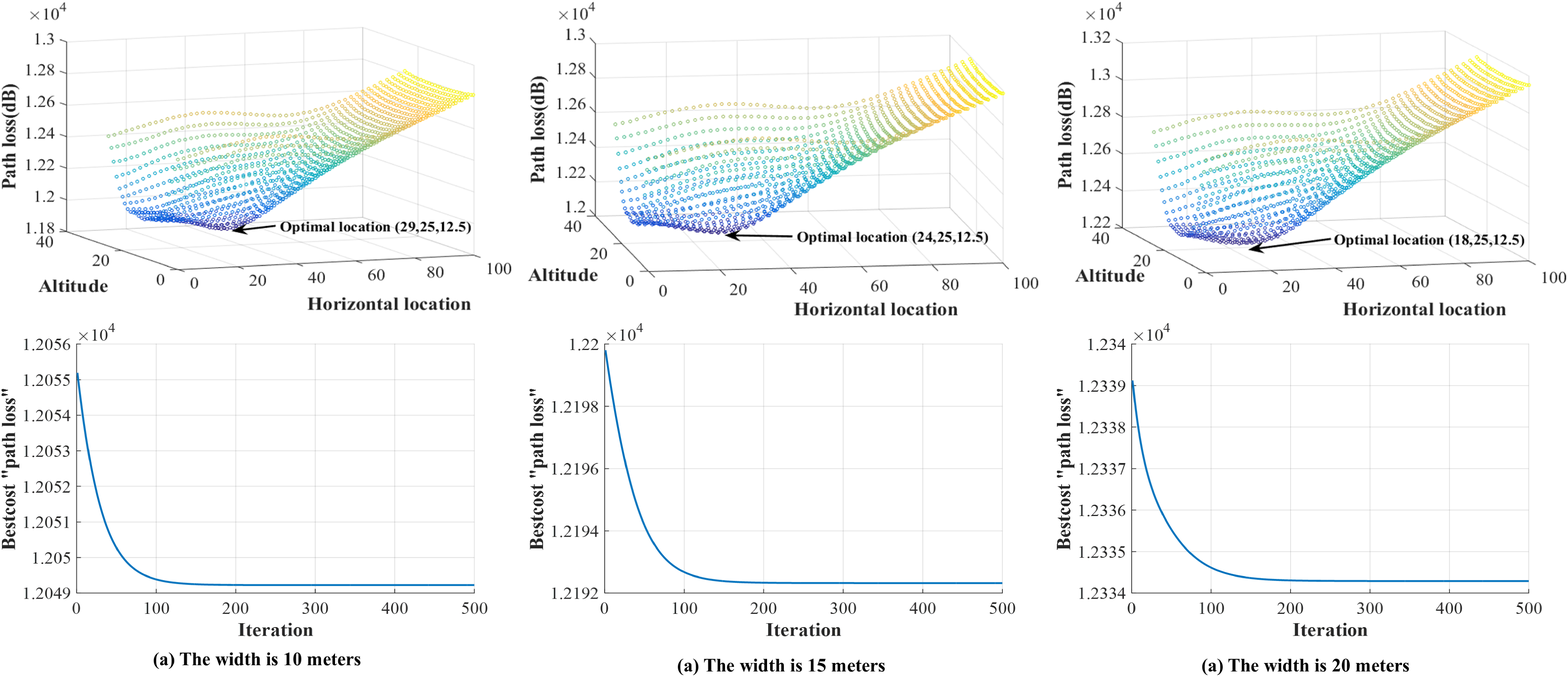}
  	\caption{UAV optimal placement (upper part)  and convergence speed of the GD algorithm (lower part) for different building widths, $f_c=15G~Hz$}
  	\label{fig10}
  \end{figure*}
 \begin{table*}[!t]
 	\scriptsize
 	\renewcommand{\arraystretch}{1.55}
 	\caption{Simulation Results}
 	\label{tabletwo}
 	\centering
 	\begin{tabular}{|c||c|c|c|c|c|c|}
 		\hline \hline
 		Algorithm&Distribution&Building height &Horizontal building&Vertical building &Efficient 3D placement &Efficient total \\
 		&&$z_b$& width $x_b$&width $y_b$&&path loss(dB) \\
 		\hline \hline
 		GD&symmetric&200&20&50&(20.025, 25, 100)&$7.8825*10^{4}$ \\
 		\hline
 		PSO &symmetric&200&20&50&(20.040, 25.0130, 100.0015)&$7.8825*10^{4}$\\
 		\hline \hline
 		GD&symmetric&250&20&50&(30.809, 25, 125)&$9.9971*10^{4}$ \\
 		\hline
 		PSO &symmetric&250&20&50&(30.736 , 24.960, 124.956)&$9.9971*10^{4}$\\
 		\hline \hline
 		GD&symmetric&300&20&50&(40.746, 25, 150)&$1.2146*10^{5}$ \\
 		\hline
 		PSO &symmetric&300&20&50&(40.758, 25.048, 150.054)&$1.2146*10^{5}$\\
 		\hline \hline
 		
 		\hline \hline
 		GD&uniform&200&20&50&(24.725, 25, 100)&$7.8853*10^{4}$ \\
 		\hline
 		PSO &uniform&200&20&50&(21.799, 37.389, 111.790)&$7.8645*10^{4}$\\
 		\hline \hline
 		GD&uniform&250&20&50&(33.818, 25, 125)&$9.9855*10^{4}$ \\
 		\hline
 		PSO &uniform&250&20&50&(32.921, 28.712, 124.029)&$9.9725*10^{4}$\\
 		\hline \hline
 		GD&uniform&300&20&50&(43.117, 25, 150)&$1.2154*10^{5}$ \\
 		\hline
 		PSO &uniform&300&20&50&(46.589, 31.506 ,143.858)&$1.2117*10^{5}$\\
 		\hline \hline

 		\hline \hline
 		GD&uniform&250&10&50&(38.521, 25, 125)&$9.7413*10^{4}$ \\
 		\hline
 		PSO &uniform&250&10&50&(32.104, 21.017, 129.266)&$9.7252*10^{4}$\\
 		\hline \hline
 		GD&uniform&250&30&50&(29.393, 25, 125)&$1.0275*10^{5}$ \\
 		\hline
 		PSO &uniform&250&30&50&(25.529, 4.938, 138.765)&$1.0211*10^{5}$\\
 		\hline \hline
 		GD&uniform&250&50&50&(22.711, 25, 125)&$1.0753*10^{5}$ \\
 		\hline
 		PSO &uniform&250&50&50&(14.548, 17.308 ,131.8940)&$1.0696*10^{5}$\\
 		\hline \hline
 		
 	\end{tabular}
 \end{table*}
 
 \begin{figure*}[!t]
 	\centering
 	\includegraphics[scale=0.115]{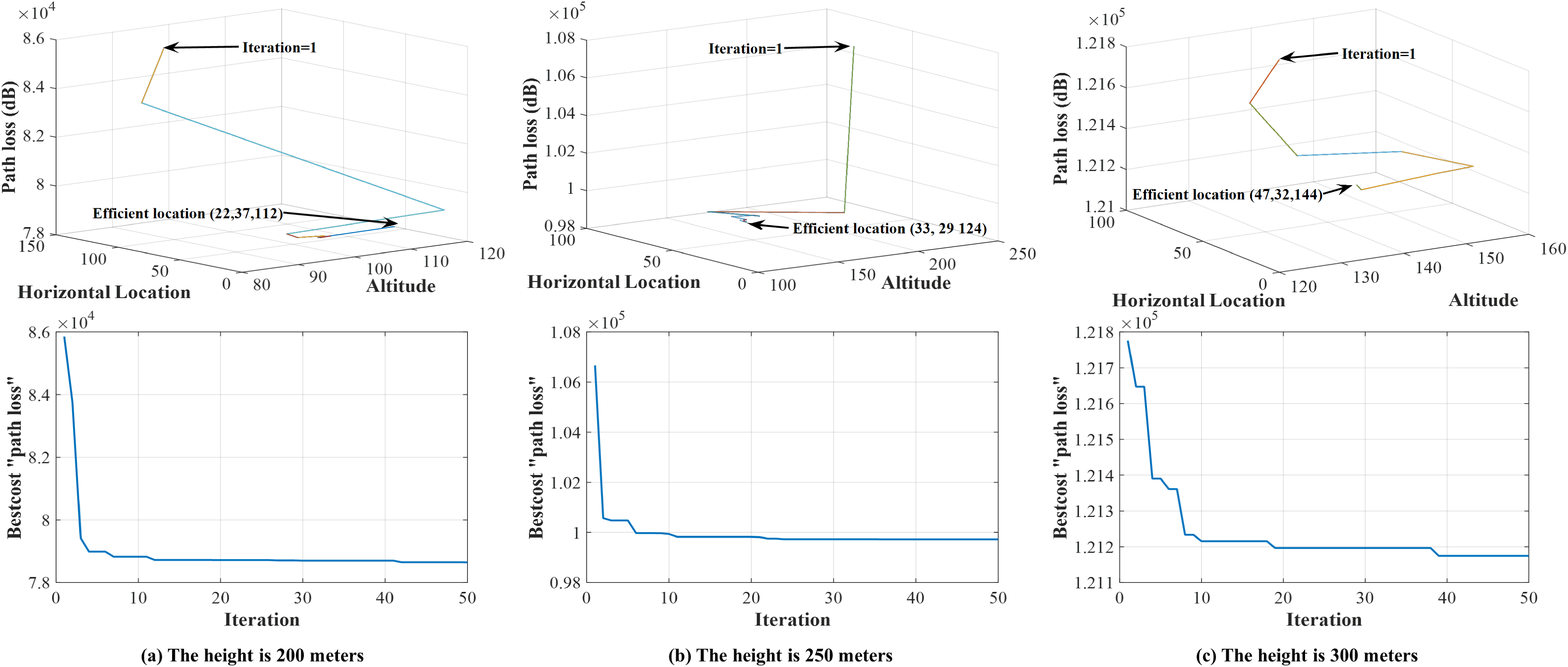}
 	\caption{UAV efficient placement (upper part)  and convergence speed of the PSO algorithm (lower part) for different building heights}
 	\label{fig:fig11}
 \end{figure*}
 
 \begin{figure*}[!t]
 	\centering
 	\includegraphics[scale=0.115]{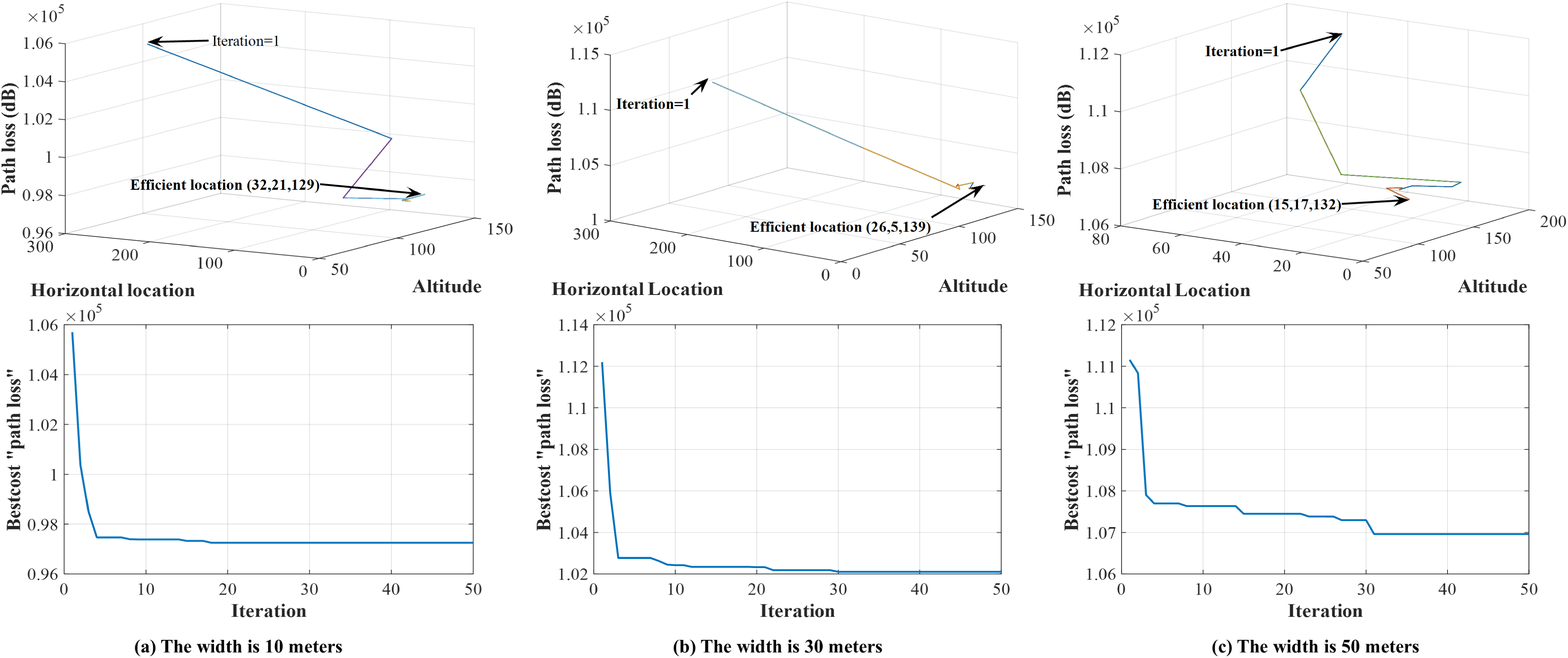}
 	\caption{UAV efficient placement (upper part)  and convergence speed of the PSO algorithm (lower part) for different building widths}
 \label{fig12}	
 \end{figure*}

 In Figures~\ref{fig7} and ~\ref{fig8}, we find the optimal horizontal points for a building of different heights. In the upper part of the figures, we find the total path loss at different locations ($x_{UAV}$,0.5$y_{b}$,$z_{UAV}$) and the optimal horizontal point $x_{UAV}$ that results in the minimum total path loss using the GD algorithm. In the lower part of the figures, we show the convergence speed of the GD algorithm. As can be seen from the figures, when the height of the building increases, the optimal horizontal point $x_{UAV}$ increases. This is to compensate the increased building penetration loss due to an increased incident angle. 
 
 In Figures~\ref{fig9} and ~\ref{fig10}, we investigate the impact of different building widths (i.e., $x_b$). We fix the building height to be 250 meters for low-SHF operating frequency and 25 meters for high-SHF, then we vary the building width. As can be seen from the figures, when the building width increases, the optimal horizontal distance decreases. This is to compensate for the increased indoor path loss due to an increased building width.
 
 Now, we validate the simulation results for low-SHF operating frequency by using the particle swarm optimization (PSO) algorithm and verify our result for the third case, when the locations of indoor users are uniformly distributed in each floor, using low-SHF operating frequency. As can be seen from
 the simulation results in Table II, both algorithms converge to the same 3D placement, when the locations of indoor users are symmetric across the $xy$ and $xz$ planes.
 
  After that, we assume that each floor contains 20 users and the locations of these users are uniformly distributed in each floor. When we apply the GD algorithm, the 3D efficient placements and the total costs for 200 meter, 250 meter and 300 meter buildings are (24.7254, 25, 100) ($7.8853*10^{4}$), (33.8180, 25, 125) ($9.9855*10^{4}$) and (43.1170, 25, 150)($1.2154*10^{5}$), respectively. UAV efficient placement and the convergence speed of the PSO algorithm for different building heights is shown in Figure 11. The 3D efficient placements and the total costs for 200 meter, 250 meter and 300 meter buildings are (21.7995, 37.3891, 111.7901) ($7.8645*10^{4}$), (32.9212, 28.7125, 124.0291) ($9.9725*10^{4}$) and (46.5898, 31.5061 ,143.8588)($1.2117*10^{5}$), respectively. As can be seen from the simulation results, the PSO algorithm provides better results. It provides total cost less than the cost that the GD algorithm provides by (37dB-208dB). This is because the PSO algorithm is designed for the case in which the locations of indoor users are uniformly distributed in each floor. On the other hand, the GD algorithm is designed for the case in which the locations of indoor users are symmetric across the dimensions of each floor.
  
  We also investigate the impact of different building widths (i.e., $x_b$) using the GD and PSO algorithms (see Figure~\ref{fig12}). We fix the building height to be 250 meters and vary the building width.  As can be seen from the simulation results, the PSO algorithm provides better results. It provides total cost less than the cost that the GD algorithm provides by (57dB-161dB).
  
   We can notice that the tradeoff in case three is similar to that in case two, when the height of the building increases, the efficient horizontal point $x_{UAV}$ computed by our algorithm increases. This is to compensate the increased building penetration loss due to an increased incident angle. Also, when the building width increases, the efficient horizontal distance computed by our algorithm decreases. This is to compensate the increased indoor path loss due to an increased building width.

\subsection{Simulation results for multiple UAVs }
In this section, we verify our results for multiple UAVs scenario. First, we assume that a building will host a special event (such as concert, conference, etc.), the dimensions of the building are $[0,20]\times[0,50]\times[0,100]$. The organizers of the event reserve all floors higher than 75 meters and they expect that 200 people will attend the event. Due to interference from near-by macro cells, the organizers decide to use UAVs to provide wireless coverage to the indoor users. We assume that 200 indoor users are uniformly distributed in upper part of the building (higher than 75 meters) and 200 indoor users are uniformly distributed in the lower part (less than 75 meters). Then, we apply the clustering indoor users algorithm to find the minimum number of UAVs required to cover the indoor users. Table III lists the parameters used in the numerical analysis for multiple UAVs.\\

The algorithm starts with $k=2$ and after it finishes clustering the indoor users, it applies the particle swarm optimization to find the UAV 3D location and UAV transmit power needed to cover each cluster. Then, it checks if the maximum transmit power is sufficient to cover each cluster, if not, the number of clusters $k$ is incremented by one and the  problem is solved again. As can be seen from the simulation results in Figure~\ref{fig13}, we need 5 UAVs to cover the indoor users. We can notice that an efficient horizontal point $x_{UAV}$ for all UAVs 3D locations is the same $x_{UAV}=25$, the minimum allowed value for $x_{UAV}$, this is because the tradeoff (shown in Figure~\ref{fig3}) disappears when a UAV covers small height of building.

In Figure~\ref{fig14}, we uniformly split the building into $k$ parts and cover it by $k$ UAVs. As can be seen from the simulation results, we need 9 UAVs to cover the indoor users. The clustering algorithm provides better results, this is because it utilizes the distribution of indoor users to divide them into clusters. On the other hand, the uniformly split method is designed for the case in which the locations of indoor users uniformly distributed in the building.

\begin{table}[!h]
	\scriptsize
	\renewcommand{\arraystretch}{1.3}
	\caption{Parameters in numerical analysis for multiple UAVs}
	\label{tablethree}
	\centering
	\begin{tabular}{|c|c|}
		\hline
		 Maximum transmit power of UAV ($P$) & 5 Watt\\
		\hline
	Operating frequency ($f$)& 2Ghz\\
		\hline
  	 Transmission bandwidth	($B$) & 50M Hz\\
		\hline
	Rate requirement for each user ($\upsilon$) & 2.2Mbps\\
		\hline
		Noise power ($N$) & -150 dBm\\
		\hline
	Min and Max allowed values for $x_j$,$[x_{min},x_{max}]$ & [25,1000]\\
		\hline
		Min and Max allowed values for $y_j$, $[y_{min},y_{max}]$& [0,50] \\
		\hline
		Min and Max allowed values for $z_j$, $[z_{min},z_{max}]$ & [0,1000] \\
		\hline 
	\end{tabular}
\end{table}

\begin{figure}[!h]
	\centering
	\includegraphics[scale=0.5]{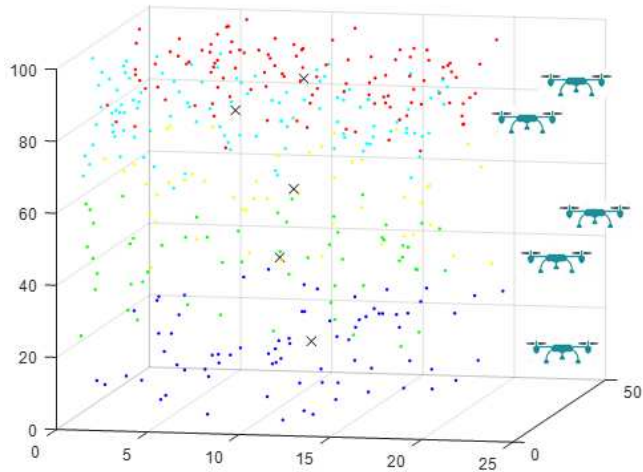}
	\caption{UAVs efficient placements using clustering algorithm}
	\label{fig13}				
\end{figure}

\begin{figure}[!h]
	\centering
	\includegraphics[scale=0.5]{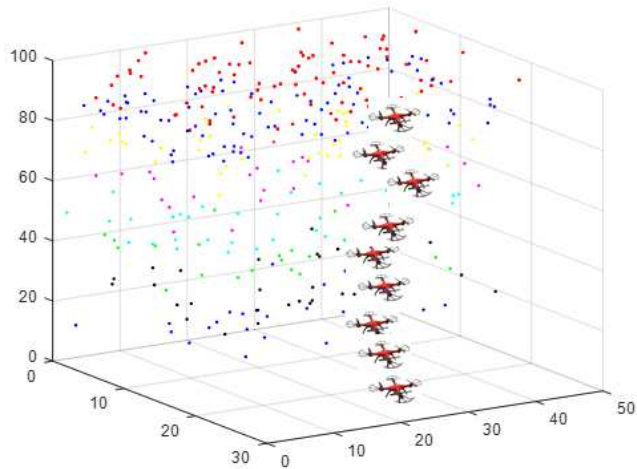}
	\caption{UAVs efficient placements using uniform split method}
	\label{fig14}				
\end{figure}

\section{Conclusion}
In this paper, we study the problem of providing wireless
coverage for users inside a high-rise building using UAVs. First, we demonstrate why the Air-to-Ground path loss model is not appropriate for considering indoor users with
3D locations. Then, we present Outdoor-to-Indoor path
loss models, show the tradeoff in these models, and study the
problem of minimizing the transmit power required to cover
the building. Due to the intractability of the problem, we study
an efficient placement of a single UAV under three cases. Due to the limited transmit power of a UAV, we formulate the problem of minimizing the number of UAVs required to provide wireless coverage to high rise building and prove that this problem is NP-complete. Due to the intractability of the problem, we use clustering to minimize the number of UAVs required to cover the indoor users. In order to model more realistic scenarios, we will study the problem of providing wireless coverage for multiple buildings in our future work.

\appendices
\section{Proof of Theorem 2}
Consider that $m_{1}$ represents the users that have altitude lower than the UAV altitude and $m_{2}$ represents the users that have altitude higher than the UAV altitude, then:
\begin{equation*}
\begin{split}
d_{out,i}=((x_{UAV}-x_i)^2+(y_{UAV}-y_i)^2+(z_{UAV}-z_i)^2)^{0.5},~~\forall z_{UAV}>z_i\\  
d_{out,i}=((x_{UAV}-x_i)^2+(y_{UAV}-y_i)^2+(z_i-z_{UAV})^2)^{0.5},~~\forall z_{UAV}<z_i\\
\end{split}
\end{equation*}
Also, 
\begin{equation*}
\begin{split}
cos_{\theta_i}=\dfrac{((x_{UAV}-x_i)^2+(y_{UAV}-y_i)^2)^{0.5}}{((x_{UAV}-x_i)^2+(y_{UAV}-y_i)^2+(z_{UAV}-z_i)^2)^{0.5}},~~\forall z_{UAV}>z_i\\ 
cos_{\theta_i}=\dfrac{((x_{UAV}-x_i)^2+(y_{UAV}-y_i)^2)^{0.5}}{((x_{UAV}-x_i)^2+(y_{UAV}-y_i)^2+(z_i-z_{UAV})^2)^{0.5}},~~\forall z_{UAV}<z_i\\
\end{split}
\end{equation*}
Rewrite the total path loss:
\begin{equation*}
\begin{split}
L_{Total}=
\sum_{i=1}^{m_1}(wlog_{10}(d_{out,i})+g_{3}(1-\cos\theta_{i})^{2})+
\sum_{i=1}^{m_2}(wlog_{10}(d_{out,i})+g_{3}(1-\cos\theta_{i})^{2})+K
\end{split}
\end{equation*}
Where:
\begin{equation*}
\begin{split}
K=\sum_{i=1}^{M}(wlog_{10}f_{Ghz}+g_{1}+g_{2}+g_{4}d_{in,i})~~~~~~~~~~~~~~~~~~~~~~~~~~~~~~~~~~~~~~~~~~~~~~~~~~~~~~~~
\end{split}
\end{equation*}
Now, take the derivative with respect to $z_{UAV}$, we get:
\begin{equation*}
\begin{split}
\dfrac{dL_{Total}}{dz_{UAV}}=
\sum_{i=1}^{m_1}\dfrac{w}{ln10}\frac{(z_{UAV}-z_i)}{((x_{UAV}-x_i)^2+(y_{UAV}-y_i)^2+(z_{UAV}-z_i)^2)}
+~~~~~~~~~~~~~~~~~~~~~~~~~\\2g_{3}.
(1-\dfrac{((x_{UAV}-x_i)^2+(y_{UAV}-y_i)^2)^{0.5}}{((x_{UAV}-x_i)^2+(y_{UAV}-y_i)^2+(z_{UAV}-z_i)^2)^{0.5}}).~~~~~~~~~~~~~~~~~~~~~~~~\\
(\dfrac{((x_{UAV}-x_i)^2+(y_{UAV}-y_i)^2)^{0.5}(z_{UAV}-z_i)}{((x_{UAV}-x_i)^2+(y_{UAV}-y_i)^2+(z_{UAV}-z_i)^2)^{\frac{3}{2}}})+~~~~~~~~~~~~~~~~~~~~~~~~~~~~~~~~\\ 
\sum_{i=1}^{m_2}\dfrac{w}{ln10}\frac{-(z_{i}-z_{UAV})}{((x_{UAV}-x_i)^2+(y_{UAV}-y_i)^2+(z_i-z_{UAV})^2)}~~~~~~~~~~~~~~~~~~~~~~~~~~~~\\
+2g_{3}.
(1-\dfrac{((x_{UAV}-x_i)^2+(y_{UAV}-y_i)^2)^{0.5}}{((x_{UAV}-x_i)^2+(y_{UAV}-y_i)^2+(z_i-z_{UAV})^2)^{0.5}}).~~~~~~~~~~~~~~~~~~~~~\\
(\dfrac{-((x_{UAV}-x_i)^2+(y_{UAV}-y_i)^2)^{0.5}(z_i-z_{UAV})}{((x_{UAV}-x_i)^2+(y_{UAV}-y_i)^2+(z_i-z_{UAV})^2)^{\frac{3}{2}}})~~~~~~~~~~~~~~~~~~~~~~~~~~~~~~~~~~
\end{split}
\end{equation*}

Rewrite the $\dfrac{dL_{Total}}{dz_{UAV}}$ again, we have:

\begin{equation*}
\begin{split}
\dfrac{dL_{Total}}{dz_{UAV}} =\sum_{i=1}^{m_1}\dfrac{w}{ln10}\frac{(z_{UAV}-z_i)}{d_{out,i}^{2}}+
2g_{3}.(1-\dfrac{((x_{UAV}-x_i)^2+(y_{UAV}-y_i)^2)^{0.5}}{d_{out,i}}).~~~~~~~~~~~~~~~~~~~~\\
(\dfrac{((x_{UAV}-x_i)^2+(y_{UAV}-y_i)^2)^{0.5}(z_{UAV}-z_i)}{d_{out,i}^{3}})+
\sum_{i=1}^{m_2}\dfrac{w}{ln10}\frac{-(z_i-z_{UAV})}{d_{out,i}^{2}}+~~~~~~~~~~~~~~~~~~~~~~~\\
2g_{3}.(1-\dfrac{((x_{UAV}-x_i)^2+(y_{UAV}-y_i)^2)^{0.5}}{d_{out,i}}).
(\dfrac{-((x_{UAV}-x_i)^2+(y_{UAV}-y_i)^2)^{0.5}(z{i}-z_{UAV})}{d_{out,i}^{3}})~~~\\ 
\end{split}
\end{equation*}
The equation above equals zero when the UAV altitude equals the half of the building height, where the locations of indoor users are symmetric across the $xy$ and $xz$ planes.

\section{Proof of Theorem 3}
Consider that $m_{1}$ represents the users that have altitude lower than the UAV altitude and $m_{2}$ represents the users that have altitude higher than the UAV altitude, then:
\begin{equation*}
\begin{split}
d_{out,i}=((x_{UAV}-x_i)^2+(y_{UAV}-y_i)^2+(z_{UAV}-z_i)^2)^{0.5},~~~\forall z_{UAV}>z_i\\  
d_{out,i}=((x_{UAV}-x_i)^2+(y_{UAV}-y_i)^2+(z_i-z_{UAV})^2)^{0.5} ,~~~\forall z_{UAV}<z_i\\
\end{split}
\end{equation*}
Also, 
\begin{equation*}
\begin{split}
\theta_i=
sin^{-1}(\dfrac{(z_{UAV}-z_i)}{((x_{UAV}-x_i)^2+(y_{UAV}-y_i)^2+(z_{UAV}-z_i)^2)^{0.5}}),~~~\forall z_{UAV}>z_i\\ 
\theta_i=
sin^{-1}(\dfrac{(z_i-z_{UAV})}{((x_{UAV}-x_i)^2+(y_{UAV}-y_i)^2+(z_i-z_{UAV})^2)^{0.5}}),~~~\forall z_{UAV}<z_i\\
\end{split}
\end{equation*}
Rewrite the total path loss:
\begin{equation*}
\begin{split}
L_{Total}=
\sum_{i=1}^{m_1}\alpha_2log_{10}(d_{out,i})
+\dfrac{(\beta_2-\beta_1)}{(1+exp(-\beta_3(sin^{-1}(u)-\beta_4)))}~~~~~~~~~~~~~~~~~~~~~~~~~~~~~~~~~~~~~~~\\
+\sum_{i=1}^{m_2}\alpha_2log_{10}(d_{out,i})
+\dfrac{(\beta_2-\beta_1)}{(1+exp(-\beta_3(sin^{-1}(u)-\beta_4)))}+K~~~~~~~~~~~~~~~~~~~~~~~~~~~~~~~~~
\end{split}
\end{equation*}
Where:
\begin{equation*}
\begin{split}
u=(\dfrac{(z_{UAV}-z_i)}{((x_{UAV}-x_i)^2+(y_{UAV}-y_i)^2+(z_{UAV}-z_i)^2)^{0.5}}),~~~\forall z_{UAV}>z_i\\ 
u=(\dfrac{(z_i-z_{UAV})}{((x_{UAV}-x_i)^2+(y_{UAV}-y_i)^2+(z_i-z_{UAV})^2)^{0.5}}),~~~\forall z_{UAV}<z_i\\
K=\sum_{i=1}^{M}(\alpha_1+\alpha_3log_{10}f_{Ghz}+\beta_{1}+\gamma_{1}d_{in,i})~~~~~~~~~~~~~~~~~~~~~~~~~~~~~~~~~\\
\end{split}
\end{equation*}
Now, take the derivative with respect to $z_{UAV}$, we get:
\begin{equation*}
\begin{split}
\dfrac{dL_{Total}}{dz_{UAV}}=
\sum_{i=1}^{m_1}\dfrac{\alpha_2}{ln10}\frac{(z_{UAV}-z_i)}{((x_{UAV}-x_i)^2+(y_{UAV}-y_i)^2+(z_{UAV}-z_i)^2)}~~~~~~~~~~~~~~~~~~~~~~~~~~~~~~~~~~~~\\
+(\dfrac{-(\beta_2-\beta_1)(\frac{-\beta_3}{\sqrt{1-u^2}})(\frac{d_{out,i}-(z_{UAV}-z_i)^2d_{out,i}^{-1}}{d_{out,i}^2})}{(1+exp(-\beta_3(sin^{-1}u-\beta_4)))}.
\dfrac{exp(-\beta_3(sin^{-1}u-\beta_4))}{(1+exp(-\beta_3(sin^{-1}u-\beta_4)))})+~~~~~~~~~~~~~~~~~~~~~~~\\
\sum_{i=1}^{m_2}\dfrac{\alpha_2}{ln10}\frac{-(z_{i}-z_{UAV})}{((x_{UAV}-x_i)^2+(y_{UAV}-y_i)^2+(z_i-z_{UAV})^2)}~~~~~~~~~~~~~~~~~~~~~~~~~~~~~~~~~~~~~~~~~~~~~~\\
+(\dfrac{-(\beta_2-\beta_1)(\frac{-\beta_3}{\sqrt{1-u^2}})(\frac{-d_{out,i}+(z_{UAV}-z_i)^2d_{out,i}^{-1}}{d_{out,i}^2})}{(1+exp(-\beta_3(sin^{-1}u-\beta_4)))}.
\dfrac{exp(-\beta_3(sin^{-1}u-\beta_4))}{(1+exp(-\beta_3(sin^{-1}u-\beta_4)))})~~~~~~~~~~~~~~~~~~~~~~~~
\end{split}
\end{equation*}
The equation above equals zero when the UAV altitude equals the half of the building height, where the locations of indoor users are symmetric across the $xy$ and $xz$ planes.

	\bibliographystyle{IEEEtran}
	\bibliography{UAVpath}
	
	\end{document}